\crefname{section}{\S}{\S}
\Crefname{section}{\S}{\S}
\crefname{prop}{proposition}{propositions}
\Crefname{prop}{Proposition}{Propositions}
\crefname{lem}{lemma}{lemmas}
\Crefname{lem}{Lemma}{Lemmas}
\crefname{thm}{theorem}{theorems}
\Crefname{thm}{Theorem}{Theorems}
\crefname{definition}{definition}{definitions}
\Crefname{definition}{Definition}{Definitions}
\newtheorem{theorem}{Theorem}
\newtheorem{proposition}[theorem]{Proposition}
\newtheorem{lemma}[theorem]{Lemma}
\theoremstyle{definition}
\newtheorem{definition}[theorem]{Definition}
\newtheorem*{remark*}{Remark}
\newcommand{\SIf}[3]{\textbf{if}~{#1}~\textbf{then}~{#2}~\textbf{else}~{#3}}
\newcommand{\SIfT}[2]{\textbf{if}~{#1}~\textbf{then}~{#2}}
\newcommand{\rname}[1]{[\textsc{#1}]}
\let\xinferrule\inferrule
\renewcommand{\inferrule}[3][]{%
  \ifx&#1&
    \xinferrule*{#2}{#3}
  \else
    \xinferrule*[right=\rname{#1}]{#2}{#3}
  \fi}
\newcommand{\EasyCrypt}{\textsf{EasyCrypt}\xspace}
\newcommand{\ssreflect}{\textsf{Ssreflect}\xspace}
\newcommand{\coq}{\textsf{Coq}\xspace}
\newcommand{\Sprhl}{\textsf{pRHL}\xspace}
\newcommand{\Sprprhl}{${\times}\mathsf{pRHL}$\xspace}
\newcommand{\Saprhl}{\textsf{apRHL}\xspace}
\newenvironment{tightcenter}{%
  \setlength\topsep{3pt}
  \setlength\parskip{3pt}
  \begin{center}
}{%
  \end{center}
}
\newcommand{\subalign}[1]{%
  \vcenter{%
    \Let@ \restore@math@cr \default@tag
    \baselineskip\fontdimen10 \scriptfont\tw@
    \advance\baselineskip\fontdimen12 \scriptfont\tw@
    \lineskip\thr@@\fontdimen8 \scriptfont\thr@@
    \lineskiplimit\lineskip
    \ialign{\hfil$\m@th\scriptstyle##$&$\m@th\scriptstyle{}##$\crcr
      #1\crcr
    }%
  }
}
\newcommand{\NN}{\mathbb{N}}
\newcommand{\ZZ}{\mathbb{Z}}
\newcommand{\intv}[2]{[{#1},{#2}]}
\newcommand{\Dist}{\ensuremath{\mathbf{Distr}}}
\newcommand{\dnull}[1][]{{{\mathbb{0}}^{#1}}}
\newcommand{\dunit}[2][]{{{\mathbb{1}}^{#1}_{#2}}}
\newcommand{\dlet}[3]{\ExpD_{{#1} \sim {#2}} [{#3}]}
\newcommand{\glauber}{\mathsf{glauber}}
\newcommand{\rwalk}{\mathsf{rwalk}}
\newcommand{\dynkin}{\mathsf{dynkin}}
\newcommand{\chlg}{\mathsf{chlg}}
\newcommand{\ab}{\mathsf{ab}}
\newcommand{\Exp}{\mathbb{E}}
\newcommand{\ExpD}{\mathbb{E}}
\newcommand{\PrS}{{\textstyle\Pr}}
\newcommand{\E}[3]{\Exp_{{#1} \sim {#2}} [{#3}]}
\newcommand{\sE}[2]{\Exp_{{#1}} [{#2}]}
\renewcommand{\P}[3]{\PrS_{{#1} \sim {#2}} [{#3}]}
\newcommand{\sP}[2]{\PrS_{#1} [{#2}]}
\newcommand{\eP}[1]{\PrS[{#1}]}
\DeclareMathOperator{\id}{id}
\DeclareMathOperator{\opp}{opp}
\DeclareMathOperator{\supp}{supp}
\DeclareMathOperator{\TV}{TV}
\newcommand{\wt}[1]{|{#1}|}
\newcommand{\proj}{\pi}
\newcommand{\True}{{\ensuremath{\mathop{\top}}}}
\newcommand{\False}{{\ensuremath{\mathop{\perp}}}}
\newcommand{\Mem}{\mathsf{State}}
\newcommand{\OState}{\Omega}
\newcommand{\Var}{\mathcal{X}}
\newcommand{\pre}{\Phi}
\newcommand{\post}{\Psi}
\newcommand{\Vars}{\ensuremath{\mathbf{Vars}}}
\newcommand{\kw}[1]{{\ensuremath{\mathbf{#1}}}\xspace}
\newcommand{\kskip}{\kw{skip}}
\newcommand{\kabort}{\kw{abort}}
\newcommand{\kif}{\kw{if}}
\newcommand{\kthen}{\kw{then}}
\newcommand{\kelse}{\kw{else}}
\newcommand{\kwhile}{\kw{while}}
\newcommand{\kdo}{\kw{do}}
\newcommand{\kinc}{{\scriptstyle +\! +}}
\newcommand{\rndarrow}{%
  \stackrel{\raisebox{-.25ex}[.25ex]%
   {\tiny $\mathdollar$}}{\raisebox{-.25ex}[.25ex]{$\leftarrow$}}}
\newcommand{\iseq}[2]{{#1}; {#2}}
\newcommand{\iabort}{\kabort}
\newcommand{\iskip}{\kskip}
\newcommand{\iass}[2]{{#1} \leftarrow {#2}}
\newcommand{\irnd}[2]{{#1} \rndarrow {#2}}
\newcommand{\icond}[3]{\kif\ {#1}\ \kthen\ {#2}\ \kelse\ {#3}}
\newcommand{\icondT}[2]{\kif\ {#1}\ \kthen\ {#2}}
\newcommand{\iwhile}[2]{\kwhile\ {#1}\ \kdo\ {#2}}
\newcommand{\cfor}[2]{#1^{#2}}
\newcommand{\itfor}[2]{{#1}_{[#2]}}
\newcommand{\Expr}{\mathcal{E}}
\newcommand{\DExpr}{\mathcal{D}}
\newcommand{\cmd}{{\mathcal{C}}}
\newcommand{\subst}[2]{[#1 := #2]}
\newcommand{\pos}{x}
\newcommand{\start}{s}
\newcommand{\jump}{r}
\newcommand{\psum}{\mathsf{psum}}
\newcommand{\ssum}{\mathsf{sum}}
\newcommand{\rev}{\mathsf{rev}}
\newcommand{\coupled}[3]{%
  {#1} \blacktriangleleft \langle {#2} \mathrel{\&} {#3} \rangle}
\newcommand{\coupledsupp}[4]{%
  {#1} \blacktriangleleft_{#4} \langle {#2} \mathrel{\&} {#3} \rangle}
\newcommand{\cterminate}[3]{{#1} \Downarrow {\iwhile {#2} {#3}}}
\newcommand{\sem}[2]{{\llbracket {#1} \rrbracket}_{#2}}
\newcommand{\degueu}{\mbox{$\begin{array}{@{}c@{}} \, \\  \, \end{array}$}}
\newcommand{\Equiv}[4]{\left\{ \degueu #3 \right\}
             \mbox{\begin{tabular}{c} ${{#1}}$ \\ ${{#2}}$ \end{tabular}}
             \left \{ \degueu #4 \right\}}
\newcommand{\Equivw}[5]
           {\left\{ \degueu #3 \right\}
             \mbox{$\begin{array}{c} {{#1}} \\ {{#2}} \end{array}$}
             \left \{ \degueu #4 \right\} \leadsto {{#5}}}
\newcommand{\eqsem}[3]{#1 \vdash #2 \equiv #3}
\newcommand{\hoare}[3]{\{ #1 \} \; {#2} \; \{ #3 \}}
\newcommand{\CpDist}[2]{\mathcal{D}_{#1}(#2)}
\newcommand{\CpId}[1]{\mathcal{D}^{=}(#1)}
\DeclareMathOperator{\NEIGHBORS}{\mathcal{N}}
\newcommand{\neighbors}[1]{\NEIGHBORS_{#1}}
\DeclareMathOperator{\VALID}{\mathcal{V}}
\newcommand{\valid}[1]{\VALID_{#1}}
\DeclareMathOperator{\SAFE}{\mathcal{S}}
\newcommand{\safe}[1]{\SAFE_{#1}}
\title{Coupling Proofs are Probabilistic Product Programs} 
\begin{document}

\toappear{}

\maketitle
\thispagestyle{empty}

\begin{abstract}
\emph{Couplings} are a powerful mathematical tool for reasoning about pairs of
probabilistic processes. Recent developments in formal verification identify a
close connection between couplings and \Sprhl, a relational program logic
motivated by applications to provable security, enabling formal construction of
couplings from the probability theory literature. However, existing work using
\Sprhl merely shows existence of a coupling and does not give a way to prove
quantitative properties about the coupling, needed to reason about
mixing and convergence of probabilistic processes. Furthermore, \Sprhl\ is
inherently incomplete, and is not able to capture some advanced forms of
couplings such as shift couplings. We address both problems as follows.
  
First, we define an extension of \Sprhl, called \Sprprhl, which explicitly
constructs the coupling in a \Sprhl derivation in the form of a probabilistic product program
that simulates two correlated runs of the original program. Existing
verification tools for probabilistic programs can then be directly applied to
the probabilistic product to prove quantitative properties of the coupling.
Second, we equip \Sprprhl\ with a new rule for $\kwhile$ loops, where reasoning
can freely mix synchronized and unsynchronized loop iterations. Our proof
rule can capture examples of \emph{shift couplings}, and the logic is relatively
complete for deterministic programs.

We show soundness of \Sprprhl and use it to analyze two classes of
examples. First, we verify rapid mixing using
different tools from coupling: standard coupling, shift coupling, and
\emph{path coupling}, a compositional principle for combining local
couplings into a global coupling. Second, we verify
(approximate) equivalence between a source and an optimized program
for several instances of loop optimizations from the literature.
\end{abstract}

\category{F.3.1}{Specifying and Verifying and Reasoning about Programs}{}

\keywords{Probabilistic Algorithms, Formal Verification, Relational
  Hoare Logic, Product Programs, Probabilistic Couplings.}

\section{Introduction}
The \emph{coupling method}~\citep{LevinPW09,Lindvall02,Thorisson00,Villani08} is
an elegant mathematical tool for analyzing the relationship between
probabilistic processes.  Informally, couplings correlate outputs of two
probabilistic processes by specifying how corresponding sampling statements are
correlated; reasoning about the correlated processes can then imply interesting
properties of the original processes.

A classical application of couplings is
showing that two probabilistic processes converge in distribution.
Consider, for instance, a symmetric simple random walk over $\mathbb{Z}$:
starting from some initial position $p$, the process repeatedly samples a value
$s$ uniformly in $\{-1,1\}$ and updates its position to $p + s$.  As the process
evolves, the distribution on position spreads out from its initial
position, and converges to a limit distribution that is the same for all initial
positions. This property can be proved by constructing a coupling where the
probability that the two coupled walks end in the same position approaches $1$
as we run the coupled processes.

Beyond merely showing convergence, typically we are interested in \emph{how
  quickly} the processes converge. For instance, a common use of probabilistic
processes is to efficiently sample from a distribution that approximates a
complicated distribution~(e.g., \citet{metropolis1953equation}).  For one
example, the Glauber dynamics~\citep{DBLP:journals/rsa/Jerrum95} approximately
samples from the uniform distribution on proper colorings of a graph---a hard
distribution to compute---by maintaining a coloring and randomly re-coloring a
single vertex at a time. More generally, the Glauber dynamics is an example of
\emph{Markov chain Monte Carlo} (MCMC), a family of techniques that underlie
many computational simulations in science, and machine learning algorithms for
performing probabilistic inference (see \citet{brooks2011handbook} for a
survey). The rate of convergence, measured by the \emph{mixing time} of a
process, determines how many steps we need to run the process before we are
approximately sampling from the target distribution. In this paper, we aim to
formally verify such quantitative features of probabilistic processes.

To this end, recently \citet{BEGHSS15} noticed a close connection between
couplings and probabilistic relational Hoare logic (\Sprhl), originally designed
for reasoning about the computational security of cryptographic
constructions~\citep{BartheGZ09}. Namely, every valid derivation in \Sprhl\
implies the existence of a coupling of the output distributions of the two
programs. Using this observation, \citet{BEGHSS15} verify \Sprhl\ judgments that
imply properties of random walks, the balls-in-bins process, and the birth-death
process. \citet{BGGHS16} extend the connection to approximate couplings in
approximate probabilistic relational Hoare logic (\Saprhl), introduced for
reasoning about differentially private computations~\citep{BartheKOZ12}, and
exploit this connection to prove differential privacy of several examples that
had previously been beyond the reach of formal verification.

While \Sprhl\ is a useful tool for constructing couplings, it has two
limitations. First, it cannot directly reason about the two coupled processes.
This poses a problem for proving
mixing and convergence properties. For instance, \citet{BEGHSS15}
prove that if a certain property $P$ holds on the coupled samples,
then two random walks meet under the coupling. By a
theorem about random walks, this means that the (total variation) distance between the
two distributions is at most the probability of $P$ under the
coupling. However, we do not know what this probability is or how it
grows as we run the random walk for more iterations, since
\Sprhl\ cannot reason about the coupled
process. We run into similar difficulties if we try to prove
convergence using \emph{path coupling}~\citep{bubley1997path}, a
general construction that shows fast convergence by upper-bounding the
expected distance when we make a transition from two coupled
states. While we can express the transition function in code,
\Sprhl\ cannot reason about expectations.
 
Second, \Sprhl\ cannot express some natural classes of couplings because the
rule for $\kwhile$ loops requires that both loops execute in lockstep.
For instance, \emph{shift couplings}~\citep{aldous93shift} allow the two
processes to meet at a random time shift, e.g., the first process at time $t_1$
could track the second process at some other time $t_2 = t_1 + \delta$. This incompleteness of
\Sprhl\ is also a limitation for more standard applications of \Sprhl\ to
program analysis, like validation of loop transformations, and more generally
for any optimization that alters the control flow of the programs.

We address these problems as follows. First, we deepen the connection
between \Sprhl\ and couplings with an observation that is reminiscent
of proofs-as-programs: not only do \Sprhl\ judgements correspond to
couplings, but \Sprhl\ \emph{proofs} encode a \emph{probabilistic
  product program} that constructs the distribution witnessing the
coupling. This program is similar to existing product program
constructions~\citep{ZaksP08,BartheCK11} in that it simulates two
program executions with a single program.  However, the probabilistic
product also coordinates the samplings in the two executions, as
specified by the coupling encoded in the original \Sprhl\ derivation.
Second, we propose a general rule for $\kwhile$ loops; our rule
subsumes several existing rules and is sufficiently expressive to
capture several examples of shift coupling. The resulting system,
which we call \emph{product} \Sprhl or \Sprprhl for short, has several benefits.

First, we obtain a simple, algorithmic procedure to construct the
probabilistic product given a \Sprhl\ derivation. The product directly
simulates the coupled processes, so we can prove quantitative
properties about probabilities or expected values of this coupling
using existing probabilistic verification techniques. Moreover, 
intermediate assertions in a \Sprhl\ derivation can
be soundly transported to the probabilistic product in
\Sprprhl\ derivations.
For instance, we may
prove synchronized control flow in \Sprhl, and directly assume
this fact in the probabilistic product. Many facts are often
practically easier to manipulate in \Sprhl, since \Sprhl\ works purely
on non-probabilistic assertions.

Second, we obtain a powerful logic that can reason about many
examples from the coupling literature, especially shift couplings, and
from the translation validation literature, especially loop
optimizations. On the foundational side, we prove that our logic is
relatively complete for deterministic programs.

\paragraph*{Summary of contributions}
We make the following contributions.
\begin{itemize}
\item We present a proof-relevant program logic \Sprprhl that extracts
  a probabilistic product program from a valid derivation, and prove
  (using the \coq\ proof assistant) that the logic is sound.

\item We propose new rules for loops and random sampling in \Sprprhl.
  We also prove relative completeness for deterministic programs.

\item We demonstrate several applications of \Sprprhl to showing convergence of
  probabilistic processes; several of these examples use path coupling, a
  compositional tool for constructing couplings which bounds convergence in
  terms of expected properties of the coupling, and shift coupling, a
  generalization of coupling where the two processes are allowed to meet at
  different times. We show how to validate some common loop
  transformations.
\end{itemize}

\section{Preliminaries}

We begin by giving a bird's eye view of probabilistic couplings, which take
output distributions from two probabilistic processes and place them in the
same probabilistic space.

\medskip

In the following, we will work with sub-distributions over discrete 
(finite or countable) sets. 

\begin{definition}
  A discrete sub-distribution over a set $A$ is defined by
  a \emph{mass function} $\mu: A \to [0, 1]$ such
  that $\sum_{a \in A} \mu(a)$ is defined and bounded by $1$.
  The quantity $\sum_{a \in A} \mu(a) \in [0, 1]$ is called the
  \emph{weight of $\mu$} and denoted by $\wt{\mu}$.
  The support $\supp(\mu)$ of $\mu$ is defined as
  $\{ x \in A \mid \mu(x) \ne 0 \}$ and is discrete (i.e., countable)
  by construction.
  We denote the set of sub-distributions over $A$ by $\Dist(A)$.
  A \emph{distribution} is a sub-distribution with weight $1$.
  The probability of an event $P$ w.r.t.\ $\mu$, written $\sP \mu P$
  (or $\eP P$ when $\mu$ is clear from the context), is defined as
  $\sum_{\{x \in A \mid P(x)\}} \mu(x)$.
\end{definition}

One can equip distributions with a monadic structure using the Dirac
distributions $\dunit{x}$ for the unit\footnote{%
  $\dunit{x}$ is the distribution where $x$ has probability $1$ and all other
  elements have probability $0$.}
and \emph{distribution expectation} $\dlet x \mu {M(x)}$ for the bind; if $\mu$
is a distribution over $A$ and $M$ has type $A \to \Dist(B)$, then the bind
defines a distribution over $B$:
\begin{equation*}
 \dlet a \mu {M(a)} : b \mapsto \sum_{a} \mu(a) \cdot M(a)(b).
\end{equation*}

When working with sub-distributions over tuples, the probabilistic versions of
the usual projections on tuples are called \emph{marginals}. For distributions
over pairs, we define the \emph{first} and \emph{second marginals}
$\proj_1(\mu)$ and $\proj_2(\mu)$ of a distribution $\mu$ over $A\times B$ by
$\proj_1 (\mu)(a)=\sum_{b\in B} \mu(a,b)$ and $\proj_2 (\mu)(b)=\sum_{a\in A}
\mu(a,b)$.  For a distribution $\mu$ over memories (i.e., maps from variables
$\Vars$ to a set of values $V$) and a set of variables $X \subseteq \Vars$, we define the
$X$-\emph{marginal distribution} of $\mu$ as
$\proj_X(\mu)(m) = \sum
     \{ \mu(m') \mid \forall x \in X .\, m'(x) = m(x) \}$.

We can also construct a sub-distribution over tuples from two distributions: the
\emph{product} sub-distribution of sub-distributions $\mu_1$ and $\mu_2$
of equal weight is defined by
\begin{gather*}
  (\mu_1 \times \mu_2) (a, b) = \mu_1(a) \cdot \mu_2(b).
\end{gather*}
We are now ready to formally define coupling.
\begin{definition}
  Two sub-distributions $\mu_1, \mu_2$ resp. over $A$ and $B$ are said
  to be \emph{coupled} with witness $\mu \in \Dist(A \times B)$,
  written $\coupled{\mu}{\mu_1} {\mu_2}$, iff
  $\proj_1(\mu)=\mu_1$ and $\proj_2(\mu)=\mu_2$.

  For $\post\subseteq A\times B$, we write $\coupledsupp \mu {\mu_1} {\mu_2}
  \post$ if $\coupled \mu {\mu_1} {\mu_2}$ and moreover $\supp
  (\mu)\subseteq\post$. We will sometimes abuse notation and call $\mu$ a
  \emph{coupling} of $\mu_1$ and $\mu_2$.
\end{definition}
As an example, suppose that $A = B$ and $\mu_1 = \mu_2$ are the uniform
distributions. Then, any bijection $f : A \to A$ gives a coupling of $\mu_1$ and
$\mu_2$; we call the resulting coupling $\CpDist{f}{A}$ so that
$\coupled{\CpDist{f}{A}}{\mu_1}{\mu_2}$. 
The coupling $\CpDist{f}{A}$ assigns positive probability only to
pairs $(v,f~v)$ where $v \in A$.

For another example, suppose again that $A = B$ and $\mu_1 = \mu_2$, but the
distributions are not necessarily uniform. Then, the identity map $\id : A \to A$
always gives a coupling of $\mu_1$ and $\mu_2$, correlating samples from both
distributions to be the same. We will write $\CpId{\mu_1}$ for the resulting
coupling, so that $\coupled{\CpId{\mu_1}}{\mu_1}{\mu_2}$.
If $\mu_1$ is the uniform distribution we will sometime write
$\CpId{A}$ instead of $\CpId{\mu_1}$.
Note that $\CpId{\mu_1}$ assigns positive probability only to pairs $(v, v)$
with $\mu_1(v), \mu_2(v) \neq 0$.

To reason about convergence of probabilistic processes, we will use the
total variation distance on distributions (also known as statistical distance).

\begin{definition}
  Let $\mu_1$ and $\mu_2$ be sub-distributions over a countable set
  $A$. The \emph{total variation ($\TV$) distance} between $\mu_1$ and
  $\mu_2$ is defined by:
  \begin{align*}
    \TV(\mu_1, \mu_2) =
      \frac{1}{2} \sum_{a \in A} |\mu_1(a) - \mu_2(a) |.
  \end{align*}
\end{definition}

To bound this distance, it is enough to find a coupling and bound the
probability that the two coupled variables differ; this is the fundamental idea
underlying the coupling method.

\begin{theorem}[Fundamental theorem of couplings (e.g., \citep{Lindvall02})]
  \label{thm:tv}
  Let $\mu_1$ and $\mu_2$ be distributions over a countable set. Then for any
  coupling $\mu$ of $\mu_1, \mu_2$, we have
  \begin{align*}
    \TV(\mu_1, \mu_2) \leq \P {(x, x')} {\mu} {x \neq x'}.
  \end{align*}
\end{theorem}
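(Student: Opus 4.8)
The plan is to prove the bound by directly manipulating the marginals of the coupling $\mu$, avoiding any need for a supremum-over-events characterization of $\TV$. First I would record the standard fact that, since $\mu_1$ and $\mu_2$ both have weight $1$, the signed quantity $\sum_a (\mu_1(a) - \mu_2(a))$ vanishes; hence the positive and negative parts of $\mu_1 - \mu_2$ carry equal total mass, and
\[
  \TV(\mu_1, \mu_2)
    = \frac{1}{2}\sum_{a} |\mu_1(a) - \mu_2(a)|
    = \sum_{a} \max(\mu_1(a) - \mu_2(a),\, 0).
\]
This reformulation is the key simplification: it replaces the absolute value by a positive part, which interacts cleanly with the coupling constraints.

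Next I would expand each marginal using the coupling hypothesis $\proj_1(\mu) = \mu_1$ and $\proj_2(\mu) = \mu_2$, writing $\mu_1(a) = \sum_b \mu(a,b)$ and $\mu_2(a) = \sum_b \mu(b,a)$. The diagonal term $\mu(a,a)$ occurs in both expansions and cancels, so that
\[
  \mu_1(a) - \mu_2(a)
    = \sum_{b \neq a} \mu(a,b) - \sum_{b \neq a} \mu(b,a).
\]
Since the subtracted sum is nonnegative, dropping it can only increase the value, which yields the pointwise bound $\max(\mu_1(a) - \mu_2(a),\, 0) \leq \sum_{b \neq a} \mu(a,b)$.

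Finally I would sum over $a \in A$ and recognize the resulting double sum as a probability under $\mu$:
\[
  \TV(\mu_1, \mu_2)
    \leq \sum_{a} \sum_{b \neq a} \mu(a,b)
    = \sum_{\{(a,b)\,:\, a \neq b\}} \mu(a,b)
    = \P{(x,x')}{\mu}{x \neq x'},
\]
which is exactly the desired inequality. Because every term appearing after the positive-part step is nonnegative, all rearrangements of these countable sums are justified by Tonelli's theorem, so no convergence subtleties arise. I expect the only genuinely delicate point to be the first step---justifying the positive-part rewriting of $\TV$ from the weight-$1$ assumption, together with the diagonal cancellation---while the remaining manipulations of nonnegative series are routine.
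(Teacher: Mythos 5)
Your proof is correct: the weight-$1$ hypothesis gives $\sum_a(\mu_1(a)-\mu_2(a))=0$, which justifies the positive-part rewriting $\TV(\mu_1,\mu_2)=\sum_a\max(\mu_1(a)-\mu_2(a),0)$; the marginal expansions, the diagonal cancellation, and the final summation are all valid, and Tonelli indeed covers the rearrangements since everything after the positive-part step is nonnegative. One point of comparison is moot, however: the paper states this theorem without proof, citing it as a classical result from the coupling literature, so the natural benchmark is the standard textbook argument. That argument proceeds via the event characterization $\TV(\mu_1,\mu_2)=\sup_{E}|\mu_1(E)-\mu_2(E)|$ and then observes that for any event $E$, $\mu_1(E)-\mu_2(E) \leq \P{(x,x')}{\mu}{x\in E \land x'\notin E} \leq \P{(x,x')}{\mu}{x\neq x'}$, since the two marginal probabilities agree on the portion of $\mu$ supported on the diagonal. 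Your route is arguably better matched to this paper, which defines $\TV$ directly as the pointwise sum $\tfrac12\sum_a|\mu_1(a)-\mu_2(a)|$: you never need to prove the equivalence with the supremum characterization, and your positive-part identity is precisely the bridge the sup-based proof would otherwise have to build. The one caveat worth recording explicitly is that your first step genuinely uses that $\mu_1$ and $\mu_2$ have \emph{equal} weight (here weight $1$, as the theorem assumes distributions); for sub-distributions of unequal weight the identity $\tfrac12\sum_a|\mu_1(a)-\mu_2(a)|=\sum_a\max(\mu_1(a)-\mu_2(a),0)$ fails, though in that regime no coupling exists anyway, since both marginals of $\mu$ must carry the same weight as $\mu$ itself.
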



\section{Product Programs}
\subsection{Language}

We will work with a core, probabilistic imperative language with a command for
random sampling from primitive distributions. The set of commands is defined as
follows:

\begin{tightcenter}
$\begin{array}{r@{\ \ }l@{\quad}l}
\cmd ::= & \iskip                     & \mbox{noop} \\
     \mid& \iabort                    & \mbox{abort} \\
     \mid& \iass{\Var}{\Expr}         & \mbox{deterministic assignment}\\
     \mid& \irnd{\Var}{\DExpr(\Expr)} & \mbox{probabilistic assignment}\\
     \mid& \iseq{\cmd}{\cmd}          & \mbox{sequencing}\\
     \mid& \icond{\Expr}{\cmd}{\cmd}  & \mbox{conditional}\\
     \mid& \iwhile{\Expr}{\cmd}       & \mbox{while loop} \\
\end{array}$
\end{tightcenter}

Here, $\Var$ is a set of \emph{variables}, $\Expr$ is a set of
(deterministic) \emph{expressions}, and $\DExpr$ is a set of
\emph{distribution expressions}. Variables and expressions are typed,
ranging over booleans, integers, lists, etc.  The expression grammar
is entirely standard, and we omit it. We will use metavariables $c$ to represent
commands, $e$ to represents expressions, and $g$ to represent distribution
expressions.

We will use several shorthands for commands:$ \icondT{e}{c}$ for
$\icond{e}{c}{\iskip}$, $\kabort$ for the looping command
$\iwhile{\top}{\iskip}$, and $\cfor{c}{e,k}$ for the $k$-fold composition of
$c$ restricted to $e$. Formally, 
$$\cfor{c}{e,k}\triangleq \iass{\imath}{0};\iwhile{(\imath<k) \land e}{c;\imath\kinc}$$ 
where $\imath$ is a fresh variable. We allow $k$ to be an arbitrary expression;
when $k$ is a constant,
$$\vdash \cfor{c}{e,k} \equiv \overbrace{\SIfT{e}{c};\ldots;\SIfT{e}{c}}^{k~\mbox{times}}$$
for the simple notion of program equivalence introduced in \Cref{fig:equiv}. 

\medskip

The denotational semantics of programs is adapted from the seminal
work of \citet{Kozen79}.
We first interpret every ground type $T$ as a set $\sem{T}{}$; other
constructors $C$ are interpreted as functions $\sem{C}{}$ that respect
their arities.
The semantics of expressions and distribution expressions, denoted $\sem{e}{m}$
and $\sem{g}{m}$ respectively, are parameterized by a \emph{state} $m$ (also
called a \emph{memory}), and are defined in the usual way.
Finally, commands are interpreted as a function from memories to
sub-distributions over memories, where memories are finite maps from
variables to values. Formally, we let $\Mem_X$ denote the set of
memories over the finite set of variables $X$. Moreover, we use
$\uplus$ to denote the standard disjoint union on finite maps, so
$\uplus:\Mem_{X_1} \times \Mem_{X_2} \rightarrow \Mem_{X_1\cup X_2} $
for disjoint finite sets $X_1$ and $X_2$.
The interpretation of $c$, written $\sem{c}{}$, is a function from
$\Mem_X$ to $\Dist(\Mem_X)$, where $\Vars(c)=X$. The definition of
$\sem{\cdot}{}$ is given in \Cref{fig:sem}; note that $c^{b, k}$ is slightly
different than $c_{b, k}$, as the former unrolls the loop $k$ times, while the
latter also drops (via the $\iabort$ instruction) executions that do not exit the
loop after $k$ iterations.

\begin{figure*}
\begin{align*}
  \sem{\iskip}{m} &= \dunit{m} &
  \sem{\iabort}{m} &= \dnull \\
  \sem{\iass x e}{m} &= \dunit{m\subst{x}{\sem{e}{m}}} &
  \sem{\irnd x g}{m} &= \dlet v {\sem{g}{m}} {\dunit{m[x:=v]}} \\
  \sem{\iseq {c_1} {c_2}}{m} &= \dlet {m'} {\sem{c_1}{m}} {\sem{c_2}{m'}} &
  \sem{\icond{e}{s_1}{s_2}}{m} &=
    \text{if $\sem{e}{m}$ then $\sem{c_1}{m}$ else $\sem{c_2}{m}$} \\
  \sem{\iwhile{b}{c}}{m} &=
    \lim_{n \to \infty}\ \sem{\itfor{c}{b,n}}{m} 
 \mathrlap{\qquad\mbox{where }\itfor{c}{b,n} \triangleq
        \overbrace{\icondT{b}{c};\ldots;\icondT{b}{c}}^{n~\mbox{times}}; 
        \icondT{b}{\iabort}} &\\
\end{align*} 

\caption{\label{fig:sem} Interpretation of commands}
\end{figure*}

Last, for any predicate over memories $\Phi$, we say that a command is
$\Phi$-\emph{lossless} iff for any memory $m$ s.t. $\Phi(m)$, the weight of
$\sem{c}{m}$ is equal to $1$; intuitively, such a command terminates with
probability $1$ from every initial memory satisfying $\Phi$. A command is
\emph{lossless} if it is $\True$-lossless, i.e.\ if $\wt{\sem{c}{m}} = 1$ for
every memory $m$.

\subsection{Proof System}
Our proof system manipulates judgments of the form:
$$\Equivw{c_1}{c_2}{\pre}{\post}{c}$$
where $c_1$, $c_2$ are statements over disjoint variables and
$\pre$ and $\post$ are assertions over the variables of $c_1$ and
$c_2$. We will not set a specific syntax for assertions, but one
natural choice is first-order formulas over the program variables of both
programs.

Informally, the judgment is valid if $c$ is a probabilistic product
program for $c_1$ and $c_2$ under the pre-condition $\pre$, i.e.\, for
every initial memory $m=m_1\uplus m_2$ satisfying the pre-condition
$\pre$, the sub-distribution $\sem{c}{m}$ is a coupling for
$\sem{c_1}{m_1}$ and $\sem{c_2}{m_2}$, and moreover
$\supp(\sem{c}{m})$ only contains states that satisfy $\post$.

\begin{definition}[Valid judgment]\label{def:valid}\mbox{}
  \begin{itemize}
  \item
  Two commands $c_1$ and $c_2$ are \emph{separable} iff their
  variables are disjoint, i.e.\, $\Vars(c_1) \cap
  \Vars(c_2) = \emptyset$.

  \item
  A judgment $\Equivw{c_1}{c_2}{\pre}{\post}{c}$ is \emph{valid} iff
  the commands $c_1$ and $c_2$ are separable and for every memory
  $m=m_1\uplus m_2$ such that $m\vDash\pre$, we have
  \[
    \coupledsupp{\sem{c}{m}}{(\sem{c_1}{m_1})}{(\sem{c_2}{m_2})}{\post} .
  \]
  \end{itemize}
\end{definition}
For comparison, judgments in $\Sprhl$ are of the form
$$\Equiv{c_1}{c_2}{\pre}{\post}$$
and assert that for every initial memory $m=m_1\uplus m_2$ such that
$m\vDash\pre$, there exists a coupling $\mu$ such that
\[ \coupledsupp{\mu}{(\sem{c_1}{m_1})}{(\sem{c_2}{m_2})}{\post} .\]
In contrast, our notion of judgment is \emph{proof-relevant}, since
the derivation of the judgment guides the product construction.  We
briefly comment on several rules in our system, presented in
\cref{fig:rules}.  The [\textsc{Conseq}] rule is similar to the rule
of consequence, and reflects that validity is preserved by weakening
the post-condition and strengthening the pre-condition; none of the
programs is modified in this case.

The [\textsc{False}] rule corresponds to the fact that under
a false precondition nothing needs to be proved.

The [\textsc{Case}] rule allows proving a judgment by case analysis;
specifically, the validity of a judgment with pre-condition $\pre$ can
be established from the validity of two judgments, one where the
pre-condition is strengthened with $e$ and the other where the
pre-condition is strengthened with $\neg e$. The
product program is of the form $\icond{e}{c}{c'}$, where $c$ and $c'$
correspond to the product programs built from the proofs
of the first and second premises respectively.

The [\textsc{Struct}] rule allows replacing programs by provably equivalent
programs. The rules for proving program equivalence are given in
\Cref{fig:equiv}, and manipulate judgments of the form $\eqsem{\pre}{c}{c'}$,
where $\pre$ is a relational assertion. We only introduce equations that are
needed for recovering derived rules, striving to keep the notion of structural
equivalence as simple as possible.

The [\textsc{Assg}] rule corresponds to the \Sprhl\ rule for
assignments; in this case, the product program is simply the
sequential composition of the two assignments.

The [\textsc{Rand}] rule informally takes a coupling between the two
distributions used for sampling in the left and right program, and produces a
product program that draws the pair of correlated samples from the coupling.
Since our language supports sampling from distribution expressions and not only
distributions, the rule asks for the existence of a coupling for each
interpretation of the distribution expressions under a valuation satisfying the
pre-condition $\pre$ of the judgment. Furthermore, note that the rule requires
that every element in the support of the coupling
validates the post-condition. This is similar to \Sprhl, and is a natural
requirement given the notion of valid judgment. However, the rule is strictly
more general than the corresponding \Sprhl\ rule, which is
restricted to the case where $\post$ is the graph of some bijection.

The [\textsc{Seq}] rule for sequential composition simply composes the two
product programs in sequence. This rule reflects the compositional property of
couplings.

The [\textsc{While}] rule for \kwhile\ loops constructs a product
program that interleaves synchronous and asynchronous executions of
the loop bodies. The first premise establishes that $k_1$ and $k_2$
are strictly positive when the loop invariant $\Psi$ holds. Then, we
specify an expression $e$---which may mention variables from both
sides---that holds true exactly when at least one of the guards is
true.

Next, the notation $\oplus\{ p_0, p_1, p_2 \}$ indicates that exactly one of the
tests $p_0$, $p_1$, and $p_2$ must hold.  These predicates guide the
construction of the product program. If $p_0$ holds, then both guards should be
equal and we can execute the two sides $k_1$ and $k_2$ iterations respectively,
preserving the loop invariant $\Psi$. If $p_1$ holds and the right loop has not
terminated yet, then the left loop also has not terminated yet (i.e., $e_2$
holds), we may execute the left loop one iteration. If $p_2$ holds and the left
loop has not terminated yet (i.e., $e_1$ holds), then the right loop also has
not terminated yet and we may execute the right loop one iteration.

The last pair of premises deal with termination. Note that
some condition on termination is needed for soundness of the logic:
if the left loop terminates with probability $1$ while the right loop
terminates with probability $0$ (i.e., never), it is impossible to construct a
valid product program since there is no distribution on pairs that has first
marginal with weight $1$ and second marginal with weight $0$. So,
we require that the first and second loops are lossless assuming $p_1$
and $p_2$ respectively. This ensures that with probability $1$, there are only
finitely many steps where we execute the left or right loop separately. Note
that the product program may still fail to terminate with positive probability since
there may be infinite sequences of iterations where $p_0$ holds so both
loops advance, but both programs will yield sub-distributions with the
\emph{same} weight so we can still find a coupling.

With these premises, the construction of the product program is straightforward.
At the top level, the product continues as long as $e$ is true, i.e., while at least one of the two
loops can make progress. Each iteration, it performs a case analysis on the three
predicates. If $p_0$ holds, then we execute the product from running the two
loops up to $k_1$ and $k_2$ iterations respectively. If $p_1$ holds, then we execute
the product from running the left loop one iteration; if $p_2$ then we execute
the product from running the right loop one iteration.

\begin{figure}
\begin{mathpar}
\xinferrule{~}{\eqsem{\pre}{c}{c}}

\xinferrule{\eqsem{\pre}{c_1}{c_2}}{\eqsem{\pre}{c_2}{c_1}}

\xinferrule{~}{\eqsem{\pre}{\irnd{x}{\dunit{x}}}{\iskip}}


\xinferrule{\pre \implies x = e}{\eqsem{\pre}{\iass{x}{e}}{\iskip}}

\xinferrule{~}{\eqsem{\pre}{c;\iskip}{c}}

\xinferrule{~}{\eqsem{\pre}{\iskip;c}{c}}

\xinferrule{\eqsem{\pre}{c_1}{c'_1}}
           {\eqsem{\pre}{c_1;c_2}{c'_1;c_2}}

\xinferrule{\eqsem{\top}{c_2}{c'_2}}
           {\eqsem{\pre}{c_1;c_2}{c_1;c'_2}}

\xinferrule{\pre \implies e}{\eqsem{\pre}{\icond{e}{c}{c'}}{c}}

\xinferrule{\pre \implies \neg e}{\eqsem{\pre}{\icond{e}{c}{c'}}{c'}}

\xinferrule{\eqsem{\pre\land e}{c_1}{c_2} \\
            \eqsem{\pre\land \neg e}{c'_1}{c'_2}
}{\eqsem{\pre}{\icond{e}{c_1}{c'_1}}{\icond{e}{c_2}{c'_2}}}

\xinferrule{\eqsem{e}{c}{c'}}{\eqsem{\pre}{\iwhile{e}{c}}{\iwhile{e}{c'}}}

\xinferrule{~}{\eqsem{\pre}{\iwhile{e}{c}}{\icondT{e}{(c;\iwhile{e}{c})}}}

\end{mathpar}

\caption{Equivalence rules}\label{fig:equiv}
\end{figure}

\begin{figure*}
  $$\begin{array}{c}
\xinferrule*[left=\textsc{Conseq}]
    { \Equivw{c_1}{c_2}{\pre}{\post}{c} \\ \pre'\Longrightarrow\pre \\
      \post\Longrightarrow\post'}
 { \Equivw{c_1}{c_2}{\pre'}{\post'}{c}}
\qquad
\xinferrule*[left=\textsc{False}]
    {~}
 { \Equivw{c_1}{c_2}{\bot}{\post}{\kskip}}
\\[2ex]
 \xinferrule*[left=\textsc{Struct}]{\Equivw{c_1}{c_2}{\pre}{\post}{c} \\
   \eqsem{\pre}{c_1}{c_1'} \\ \eqsem{\pre}{c_2}{c_2'} \\ \eqsem{\pre}{c}{c'}}
 {\Equivw{c_1'}{c_2'}{\pre}{\post}{c'}}
\\[2ex]
\xinferrule*[left=\textsc{Case}]{\Equivw{c_1}{c_2}{\pre\land e}{\post}{c} \\
 \Equivw{c_1}{c_2}{\pre\land \neg e}{\post}{c'}}
 { \Equivw{c_1}{c_2}{\pre}{\post}{\icond{e}{c}{c'}}}
\\[2ex]
  \xinferrule*[left=\textsc{Assg}]
  {~}
  {\Equivw
      {\iass{x_1}{e_1}}{\iass{x_2}{e_2}}
      {\post[e_1/x_1,e_2/x_2]}{\post}
      {\iass{x_1}{e_1};\iass{x_2}{e_2}}}
\\[2ex]
\xinferrule*[left=\textsc{Rand}]
            {\forall m. m\vDash \pre \Longrightarrow
              \coupledsupp{\sem{g}{m}}{\sem{g_1}{m}}{\sem{g_2}{m}}{
                \{ (v_1,v_2) \mid m[x_1:=v_1,x_2:=v_2] \vDash \post \}}}
  {\Equivw
      {\irnd{x_1}{g_1}}{\irnd{x_2}{g_2}}
      {\pre}
      {\post}{\irnd{(x_1,x_2)}{g}}}
\\[2ex]
\xinferrule*[left=\textsc{Seq}]{\Equivw{c_1}{c_2}{\pre}{\Xi}{c} \\
 \Equivw{c'_1}{c'_2}{\Xi}{\post}{c'}}
 { \Equivw{c_1;c'_1}{c_2;c'_2}{\pre}{\post}{c;c'}}

\\[2ex]

  \xinferrule*[left=\textsc{While}]
              {\post \implies k_1>0 \land k_2 >0
                \\
                \post \implies (e_1\lor e_2)=e
                \\
                \post \land e \implies \oplus \{p_0, p_1, p_2\}
                \\\\
      \post \land p_0 \land e \implies e_1= e_2 \\
      \post \land p_1 \land e \implies e_1 \\
      \post \land p_2 \land e \implies e_2 \\\\
      \iwhile{e_1 \land p_1}{c_1} \quad \post\text{-lossless} \\
      \iwhile{e_2 \land p_2}{c_2} \quad \post\text{-lossless} \\\\
     \Equivw{\cfor{c_1}{e_1,k_1}}{\cfor{c_2}{e_2,k_2}}{\post \land p_0}{\post}{c_0} \\
     \Equivw{c_1}{\iskip}{\post \land e_1\land p_1}{\post}{c_1} \\
     \Equivw{\iskip}{c_2}{\post \land e_2\land p_2}{\post}{c_2}}
  {
    \Equivw{\iwhile{e_1}{c_1}}{\iwhile{e_2}{c_2}}
      {\post}
      {\post \land \neg e_1 \land \neg e_2}
      {\iwhile{e}{\icond{p_0}{c_0}{\icond{p_1}{c_1}{c_2}}}}}
\end{array}$$

\caption{\label{fig:rules} Proof rules}
\end{figure*}


\begin{figure*}
$$\begin{array}{c}
\xinferrule*[left=\textsc{Assg-L}]
  {~}
  {\Equivw
      {\iass{x_1}{e_1}}{\iskip}
      {\post[e_1/x_1]}{\post}
      {\iass{x_1}{e_1}}}
\\[2ex]
\xinferrule*[left=\textsc{Rand-L}]
  {\mu_1~\mathrm{lossless}}
  {\Equivw
      {\irnd{x_1}{g_1}}{\iskip}
      {\forall v_1\in\supp(g_1), \post [v_1/x_1]}
      {\post}{\irnd{x_1}{g_1}}}
\\[2ex]
  
\xinferrule*[left=\textsc{Cond-L}]
  {\Equivw{c_1}{c_2}{\pre \land e_1}{\post}{c} \\
   \Equivw{c_1'}{c_2}{\pre \land \neg e_1}{\post}{c'}}
 {\Equivw
     {\icond{e_1}{c_1}{c_1'}}{c_2}
     {\pre}{\post}{\icond{e_1}{c}{c'}}}
 \\[2ex]
 
\xinferrule*[left=\textsc{Cond-S}]
  {\pre \implies e_1 = e_2 \\
   \Equivw{c_1}{c_2}{\pre \land e_1}{\post}{c} \\
   \Equivw{c_1'}{c_2'}{\pre \land \neg e_1}{\post}{c'}}
 {\Equivw
     {\icond{e_1}{c_1}{c_1'}}{\icond{e_2}{c_2}{c_2'}}
     {\pre}{\post}{\icond{e_1}{c}{c'}}}
 \\[2ex]
\xinferrule*[left=\textsc{While-L}]
  {
    \Equivw{c_1}{\iskip}{\post \land e_1}{\post}{c_1} \\
    \iwhile{e_1}{c_1} \quad \post\text{-lossless}}
 {\Equivw{\iwhile{e_1}{c_1}}{\iskip}
     {\post}{\post\land\neg e_1}{\iwhile{e_1}{c_1}}}
\\[2ex]
 \xinferrule*[left=\textsc{While-S}]
  {\post \implies e_1 = e_2 \\
   \Equivw{c_1}{c_2}{\post \land e_1}{\post}{c}}
 {\Equivw
     {\iwhile{e_1}{c_1}}{\iwhile{e_2}{c_2}}
     {\post}{\post \land \neg e_1}{\iwhile{e_1}{c}}} 
  \end{array}$$
\caption{Derived rules}\label{fig:derived}
\end{figure*}

\subsubsection*{Derived rules}
Presentations of relational Hoare logic often include \emph{one-sided
  rules}, which are based on the analysis of a single program (rather
than two).  By reasoning about program equivalence, we can derive all
one-sided rules and the two-sided rules of conditionals and loops from
\Sprhl, presented in \cref{fig:derived}, within our system.\footnote{%
  In fact, the one-sided and two-sided rules are inter-derivable for
  all constructions except random assignments and loops.}  
\begin{proposition}
All the rules in \Cref{fig:derived} (and their symmetric version) are
derived rules.
\end{proposition}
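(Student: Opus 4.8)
The plan is to prove the proposition by exhibiting, for each rule in \Cref{fig:derived}, a finite derivation built only from the primitive rules of \Cref{fig:rules} and the equivalence rules of \Cref{fig:equiv}; no induction is needed, only a case analysis over the six rules, and the symmetric variants follow by the identical arguments with the two sides exchanged, since the primitive two-sided rules are insensitive to this relabeling. Two recurring techniques drive every case. For the synchronized rules \rname{Cond-S} and \rname{While-S} the idea is to instantiate a general two-sided primitive with the synchronizing side-conditions. For the one-sided rules \rname{Assg-L}, \rname{Rand-L}, and \rname{While-L} the idea is to \emph{pad} the $\iskip$ side with a semantically trivial program (a self-assignment $\iass{z}{z}$, a Dirac sample $\irnd{y}{\dunit{y}}$, or a never-entered loop $\iwhile{\False}{\iskip}$), apply the corresponding two-sided primitive, and then use \rname{Struct} to rewrite the padding away and simplify the extracted product program.

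The conditional cases are the most direct. For \rname{Cond-L} I would apply \rname{Case} on the guard $e_1$: in the $e_1$ branch \rname{Struct} rewrites $\icond{e_1}{c_1}{c_1'}$ to $c_1$ (the guard-true equivalence), reducing the goal to the first premise, and symmetrically the $\neg e_1$ branch reduces to the second, yielding the product $\icond{e_1}{c}{c'}$. \rname{Cond-S} is identical, except that the hypothesis $\pre \implies e_1 = e_2$ lets \rname{Struct} also collapse the right-hand conditional $\icond{e_2}{c_2}{c_2'}$ to the matching branch.

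The loop cases reduce to a single instantiation of the general \rname{While} rule followed by cleanup. For \rname{While-S} I would take $p_0 = \True$, $p_1 = p_2 = \False$, $e = e_1$, and $k_1 = k_2 = 1$: the exclusivity, positivity, and guard-agreement premises are immediate using $\post \implies e_1 = e_2$, both losslessness premises are vacuous because each $e_i \land p_i$ reduces to $\False$, and the two asynchronous premises hold by \rname{False} since their preconditions collapse to $\bot$. The synchronized premise asks for a product of $\cfor{c_1}{e_1,1}$ and $\cfor{c_2}{e_2,1}$, which I discharge by rewriting each to $\icondT{e_i}{c_i}$ (the constant-$k$ unrolling) and applying the already-derived \rname{Cond-S} to the body premise; finally \rname{Conseq} drops the redundant $\neg e_2$ from the postcondition and \rname{Struct} simplifies the extracted loop body $\icond{\True}{\icondT{e_1}{c}}{\cdots}$ down to $c$ under the loop-guard invariant $e_1$. \rname{While-L} is the dual instantiation $p_0 = p_2 = \False$, $p_1 = \True$ with the right loop padded as $\iwhile{\False}{\iskip}$; here the single nontrivial premise is exactly the one-sided body judgment, and the required losslessness of $\iwhile{e_1}{c_1}$ is supplied as a hypothesis of \rname{While-L}.

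The main obstacle, and the step I would spend the most care on, is \rname{Rand-L}, together with the product-simplification steps that recur elsewhere. I would apply \rname{Rand} with the right sample padded to $\irnd{y}{\dunit{y}}$ and the witnessing coupling $g = \dlet{v_1}{g_1}{\dunit{(v_1,y)}}$, whose first marginal is $g_1$ and whose second marginal is the point mass $\dunit{y}$. This is precisely where losslessness is indispensable: the second marginal has weight one only if $g_1$ is lossless, which is exactly the side-condition required to couple with a weight-one Dirac distribution, and the support condition of \rname{Rand} then unpacks to the premise $\forall v_1 \in \supp(g_1).\, \post[v_1/x_1]$. The delicate part is the closing \rname{Struct} step, which must recognize that the extracted paired sampling $\irnd{(x_1,y)}{g}$ is equivalent to the single sampling $\irnd{x_1}{g_1}$ once the dummy component $y$ is marginalized out; making this rigorous requires checking that the deliberately minimal structural-equivalence layer of \Cref{fig:equiv} is rich enough to justify eliminating a Dirac-marginal component of a product sample, which is the one place where that equivalence theory does real work rather than bookkeeping.
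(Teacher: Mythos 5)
Your derivations coincide with the paper's own: each one-sided rule is obtained from its two-sided counterpart by padding the $\iskip$ side with a trivial statement ($\iass{z}{z}$, $\irnd{y}{\dunit{y}}$, or the never-entered loop) and cleaning up with \rname{Struct}; \rname{Rand-L} uses exactly the product coupling $g_1 \times \dunit{y}$, whose second marginal is proper precisely when $g_1$ is lossless (the paper likewise cites \rname{Rand}, \rname{Struct}, and ``the product distribution $\mu_1\times\mu_2$ is a valid coupling of proper distributions''); and your \rname{While-S} and \rname{While-L} instantiations ($p_0=\top$, $p_1=p_2=\bot$, $k_1=k_2=1$, $e=e_1$, resp.\ $p_1=\top$, $p_0=p_2=\bot$, $e_2=\bot$, $c_2=\iskip$) are exactly those given in the paper, with \rname{False} discharging the vacuous premises. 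The one delicate point you flag---whether the minimal equivalence layer of \Cref{fig:equiv} licenses rewriting the paired sampling $\irnd{(x_1,y)}{g}$ to $\irnd{x_1}{g_1}$---is equally glossed over in the paper's one-line justification, so your proposal is, if anything, more explicit than the paper's own sketch.
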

We briefly comment on some of the derived rules. The [\textsc{Assg-L}] rule is
the one-sided rule for assignment. It can be derived from its two-sided
counterpart, the [\textsc{Assg}] rule, using the [\textsc{Struct}] rule.

The [\textsc{Rand-L}] rule is the one-sided rule for random sampling. It can be
derived from its two-sided counterpart, the [\textsc{Rand}] rule, using the
[\textsc{Struct}] rule and the fact that the product distribution
$\mu_1\times\mu_2$ is a valid coupling of proper distributions $\mu_1$ and
$\mu_2$.

We have one-sided and two-sided rules for conditionals.  The [\textsc{Cond-L}]
rule is the one-sided version; it can be derived from the [\textsc{Case}] and
[\textsc{Struct}] rules.  The [\textsc{Cond-S}] rule is the two-sided version;
the rule assumes that the two guards of the conditional statements are
synchronized, so that one must only need to reason about the cases where both
statements enter the true branch, and the case where both statements enter the
false branch.  It can be derived from the [\textsc{Case}], [\textsc{Struct}],
and [\textsc{False}] rules. 

We also have one-sided and two-sided rules for loops.  The
[\textsc{While-L}] rule corresponds to the \Sprhl\ one-sided rule for
\kwhile\ loops; it can be derived from the [\textsc{While}] rule by
setting $p_1=\top$, $p_0=p_2=\bot$, and $k_1=k_2=1$, $e=e_1$,
$e_2=\bot$, $c_2=\iskip$ and using the [\textsc{Struct}] and
[\textsc{False}] rules.  The rule [\textsc{While-S}] corresponds to
the \Sprhl\ two-sided rule for \kwhile\ loops. This rule assumes that
the two loops are synchronized, i.e., the guards of the two loops are
equal assuming the loop invariant. This rule can be derived from the
general [\textsc{While}] rule by setting $p_0=\top$, $p_1=p_2=\bot$,
and $k_1=k_2=1$, and $e=e_1$, and using the [\textsc{False}] rule.

\subsection{Soundness and Relative Completeness}
We have formally verified the soundness theorem below in the \coq
proof assistant, with its \ssreflect extension.
\begin{theorem}[Soundness]
Every derivable judgment is valid.
\end{theorem}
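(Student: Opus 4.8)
The plan is to proceed by structural induction on the derivation, showing that each rule preserves validity: assuming every premise judgment is valid and the side conditions hold, I establish that the conclusion is valid in the sense of \Cref{def:valid}. Separability of $c_1$ and $c_2$ is immediate for every rule, so the content is always the coupling condition $\coupledsupp{\sem{c}{m}}{\sem{c_1}{m_1}}{\sem{c_2}{m_2}}{\post}$ for each $m = m_1 \uplus m_2 \vDash \pre$. The administrative rules are routine. The [\textsc{Conseq}] rule follows because strengthening the precondition shrinks the set of memories to consider while weakening the postcondition only enlarges the allowed support; [\textsc{False}] is vacuous since no memory satisfies $\bot$; and [\textsc{Case}] reduces to the observation that $\sem{\icond{e}{c}{c'}}{m}$ selects the branch matching the deterministic value of $e$ at $m$, so exactly one of the two valid premises applies verbatim.

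Three rules each need a supporting lemma. For [\textsc{Struct}] I would first prove that $\eqsem{\pre}{c}{c'}$ implies $\sem{c}{m} = \sem{c'}{m}$ whenever $m \vDash \pre$, by induction on the equivalence derivation of \Cref{fig:equiv}; the only non-syntactic case is the loop-unrolling equation, which follows from the limit definition of $\sem{\iwhile{e}{c}}{\cdot}$. Validity then transfers along the three equalities. For [\textsc{Assg}] the two sides denote Dirac distributions, their product is again Dirac, and the substituted precondition $\post[e_1/x_1,e_2/x_2]$ guarantees the unique support point satisfies $\post$. For [\textsc{Rand}] the essential step is a marginal computation: pushing $\proj_1$ (resp.\ $\proj_2$) through the bind defining $\sem{\irnd{(x_1,x_2)}{g}}{m}$ and using that $\sem{g}{m}$ is, by hypothesis, a coupling of $\sem{g_1}{m}$ and $\sem{g_2}{m}$, shows its marginals are exactly $\sem{\irnd{x_1}{g_1}}{m_1}$ and $\sem{\irnd{x_2}{g_2}}{m_2}$, while the support condition is the stated hypothesis on $\supp(\sem{g}{m})$. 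For [\textsc{Seq}] I would isolate the fact that couplings compose under monadic bind: if $\mu$ couples $\sem{c_1}{m_1}$ and $\sem{c_2}{m_2}$ with support in $\Xi$, and the second premise supplies, for every $m' \vDash \Xi$ in the support, a coupling of the continuations, then $\dlet{m'}{\mu}{\sem{c'}{m'}}$ couples the sequential compositions with support in $\post$. This is a direct marginal calculation, and is precisely where compositionality of the logic comes from.

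The main obstacle is [\textsc{While}]. Write $c_\ast = \icond{p_0}{c_0}{\icond{p_1}{c_1}{c_2}}$ for the product loop body. I would first use the three body premises together with $\oplus\{p_0,p_1,p_2\}$ and the guard implications to show that $\post$ is a \emph{coupling invariant}: starting from any $m \vDash \post \land e$, one iteration of $c_\ast$ yields a sub-distribution coupling one guarded batch $\cfor{c_1}{e_1,k_1}$ (resp.\ single step $c_1$, resp.\ $\iskip$) on the left with the symmetric batch on the right, all of whose support again satisfies $\post$. The decisive point is matching the single interleaved loop against the two separate loops under the limit semantics $\sem{\iwhile{e}{c_\ast}}{m} = \lim_n \sem{\itfor{c_\ast}{e,n}}{m}$. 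Projecting to the left variables, the guard analysis forces every product step to perform on the left either a genuine guarded iteration of $c_1$ while $e_1$ holds (the $p_0$ and $p_1$ cases) or a stutter once $e_1$ is false, which the case distinction forces to be a $p_2$ step; dually on the right. Hence the left marginal of the product loop is semantically $\iwhile{e_1}{c_1}$ possibly padded with idempotent stutters, and symmetrically for the right.

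Since marginals commute with the monotone limit, it then remains to show that no probability mass is trapped in a purely one-sided phase, and this is exactly the role of the two losslessness premises: $\post$-losslessness of $\iwhile{e_1 \land p_1}{c_1}$ and of $\iwhile{e_2 \land p_2}{c_2}$ guarantees that the asynchronous stretches terminate almost surely, so the marginals of the finite unrollings converge to the \emph{full} loop distributions rather than to strict sub-distributions. Any remaining mass that escapes to nontermination does so only through infinitely many synchronized $p_0$ steps, where both sides shed equal weight, so the coupling is preserved in the limit. I expect this convergence argument---reconciling one interleaved unrolling with two independent unrollings while tracking weights through losslessness---to be the crux of the proof, and it is the part where the \coq\ development carries the bookkeeping.
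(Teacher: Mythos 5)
Your proposal is correct and follows essentially the same route as the paper: structural induction on the derivation with per-rule validity preservation (including the same supporting lemmas for [\textsc{Struct}], [\textsc{Rand}], and monadic composition of couplings for [\textsc{Seq}]), with the [\textsc{While}] case resolved by showing each marginal of the product loop equals the corresponding one-sided loop under the limit semantics, the losslessness premises ruling out mass trapped in asynchronous phases. The paper's sketch merely makes your convergence step concrete: it proves the marginal equality as a double pointwise inequality, reducing $\sem{w_1}{m_1} \preceq \proj_1(\sem{w}{m})$ to the finite unrollings $\itfor{c_1}{e_1,n}$ by induction on $n$ with a case analysis on $p_0, p_1, p_2$, where the $p_2$ case is discharged exactly as you anticipate---running the lossless loop $\iwhile{e_2 \land p_2}{c_2}$ and then reducing to the $p_0$/$p_1$ cases via monotony and separability.
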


\begin{proof}[Proof sketch]
  The proof is by induction on the proof derivation.
  We only detail the case for the rule~\rname{While}, using the same
  notations of \cref{fig:rules}.
  It is immediate that the product program satisfies the
  post-condition:
  i)~the loop condition $e$ does not hold after the execution of the
  loop by construction, and
  ii)~the rule premises ensure that the loop body of the product
  program preserves the invariant $\post$.
  We are left to prove that the semantics of the projections of the
  product program correspond to respective semantics of the original
  programs. We here detail the proof for first projection, the one for
  the second projection being similar.

  Let $m$ s.t. $m \models \post$. We want to prove that
  $\sem{w_1}{m_1} = \proj_1(\sem{w}{m})$
  where $m_1 \triangleq \proj_1(m)$, $w \triangleq \iwhile{e}{c}$
  and $w_1 \triangleq \iwhile{e_1}{c_1}$.
  We prove this equation by verifying the double inequality
  $\sem{w_1}{m_1} \preceq \proj_1(\sem{w}{m})$ and
  $\proj_1(\sem{w}{m}) \preceq \sem{w_1}{m_1}$
  where $\preceq$ denotes the pointwise ordering of functions.
  We only detail the first one, the proof for the latter being
  similar.
  By definition of $\sem{w_1}{m_1}$ as
  $\lim_{n \to \infty} \sem{\itfor{c_1}{e_1,n}}{m_1}$,
  proving the first inequality can be reduced to proving that
  $\sem{\itfor{c_1}{e_1,n}}{m_1} \preceq \proj_1(\sem{w}{m})$
  for any $n \in \mathbb{N}$.
  We proceed by induction on $n$ and only detail the inductive case,
  the base case being immediate.
  We here do a case analysis on $\sem{e_1}{m}$. Here too, by lack of
  space, we only detail the more technical case where $\sem{e_1}{m}$
  holds.
  We then proceed by case analysis on $\sem{p_0}{m}$, $\sem{p_1}{m}$
  and $\sem{p_2}{m}$, all of them being pairwise mutually exclusive:
  \begin{itemize}
  \item If $\sem{p_0}{m}$ holds, then
    \begin{align*}
      \sem{\itfor{c_1}{e_1,n+1}}{m_1}
      &= \dlet {m'} {\sem{c_1}{m_1}} {\sem{\itfor{c_1}{e_1,n}}{m'}} \\
      &\preceq \dlet {m'} {\sem{\cfor{c_1}{e_1,k_1}}{m_1}}
                 {\sem{\itfor{c_1}{e_1,n}}{m'}} \\
      &= \dlet {m'} {\sem{c_0}{m}} {\sem{\itfor{c_1}{e_1,n}}{m'}}
       = \sem{w}{m}.
    \end{align*}

  \item If $\sem{p_1}{m}$ holds, then
    \[\left\{\begin{aligned}
      \sem{w}{m} &= \dlet {m'} {\sem{c_1}{m}} {\sem{w}{m'}} \\
      \sem{\itfor{c_1}{e_1,n+1}}{m_1} &=
        \dlet {m'} {\sem{c_1}{m_1}} {\sem{\itfor{c_1}{e_1,n}}{m'}}.
    \end{aligned}\right.\]
    We conclude by an immediate application of the induction
    hypothesis, using the monotony of the distribution expectation
    operator.

  \item If $\sem{p_2}{m}$ holds, then
    \[
      \sem{\itfor{c_1}{e_1,n+1}}{m_1}
      = \dlet{m'}{ \proj_1(\sem{s}{m})}{ \sem{\itfor{c_1}{e_1,n+1}}{m'}}
    \]
    where $s \triangleq \iwhile{e_2\land p_2}{c_2};w$.
    Note that after having executed $\iwhile{e_2\land p_2}{c_2}$
    ---which is lossless--- $\post$ and $e_1$ still hold while $e_2$
    does not. By the premises of the rule, $p_2$ must then be false,
    so $p_0 \oplus p_1$ holds for every memory $m'$ in the support of
    $\proj_1(\sem{s}{m})$.
    In that case, following the two first cases, we know that
    $\sem{\itfor{c_1}{e_1,n+1}}{m'} \preceq \proj_1(\sem{w}{m'})$.
    Hence,
    \begin{align*}
      \sem{\itfor{c_1}{e_1,n+1}}{m_1}
      &\preceq \dlet{m'}{\proj_1(\sem{s}{m})}{\proj_1(\sem{w}{m'})}
         \tag{monotony} \\
      &= \proj_1(\dlet{m'}{\sem{s}{m}}{\sem{w}{m'}})
         \tag{separability} \\
      &= \proj_1(\sem{w}{m}),
    \end{align*}
    where the penultimate step is valid because
    $\iwhile{e_2\land p_2}{c_2}$ does not modify $m_1$ ---~$c_1$ and
    $c_2$ being separable. \qedhere
  \end{itemize}
\end{proof}
Although it is not a primary objective of our work, we briefly comment
on completeness of the logic. First, the coupling method is not
complete for proving rapid mixing of Markov chains.\footnote{%
  \citet{DBLP:journals/rsa/KumarR01} show that the class of
  \emph{causal couplings}---which contains all couplings in our
  logic---are unable to prove rapid mixing for some rapidly-mixing Markov
  chains~\citep{diaconis91,JerrumS89}}
Second, it is not clear that our proof system is complete with respect to
hoisting random assignments out of loops.

However, we note that the \emph{deterministic} fragment of our logic achieves
completeness for programs that satisfy a sufficiently strong termination
property; the key is that the new rule for $\kwhile$ subsumes self-composition for
$\kwhile$ loops, provided they terminate on all initial memories satisfying the
invariant. More precisely, we can prove the following completeness theorem.

\begin{theorem}
  Let $c_1$ and $c_2$ be separable deterministic programs. If
  $\hoare{\pre}{c_1;c_2}{\post}$ is derivable using Hoare logic, then
  \[ \Equivw{c_1}{c_2}{\pre}{\post}{\_} \]
  is derivable. Therefore, \Sprprhl is relatively complete for
  deterministic programs.
\end{theorem}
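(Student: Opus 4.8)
The plan is to exhibit a concrete product, namely the sequential composition $\iseq{c_1}{c_2}$, and to build its \Sprprhl{} derivation by mirroring the Hoare derivation one side at a time. The first step is to decompose the hypothesis: by relative completeness of ordinary Hoare logic, from $\hoare{\pre}{\iseq{c_1}{c_2}}{\post}$ I extract an intermediate assertion $\Xi$---for instance the strongest postcondition of $c_1$ under $\pre$, or the weakest liberal precondition of $c_2$ under $\post$---so that both $\hoare{\pre}{c_1}{\Xi}$ and $\hoare{\Xi}{c_2}{\post}$ are derivable. Expressibility of $\Xi$ in the assertion language is the usual standing assumption of any relative-completeness statement, and all the implications I generate below are discharged by the assertion oracle.

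The core is a one-sided completeness lemma: for any deterministic command $c$ ranging over the left-hand variables, if $\hoare{\pre}{c}{\post}$ is Hoare-derivable then $\Equivw{c}{\iskip}{\pre}{\post}{c}$ is derivable in \Sprprhl, and symmetrically on the right. I would prove this by induction on $c$ (equivalently, on the Hoare derivation). A leaf assignment is discharged by [\textsc{Assg-L}] and $\iskip$ by a degenerate self-assignment $\iass{x}{x} \equiv \iskip$ together with [\textsc{Struct}]; the rule of consequence is absorbed by [\textsc{Conseq}]; a conditional $\icond{e}{c_a}{c_b}$ is handled by applying the induction hypothesis to each branch and closing with [\textsc{Cond-L}]; and a sequence $\iseq{c_a}{c_b}$ is handled by the induction hypothesis on each factor, composed with [\textsc{Seq}] and then cleaned up by [\textsc{Struct}] using $\eqsem{\pre}{\iseq{\iskip}{\iskip}}{\iskip}$.

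Granting the lemma, the theorem follows immediately. Applying it on the left yields $\Equivw{c_1}{\iskip}{\pre}{\Xi}{c_1}$ and on the right $\Equivw{\iskip}{c_2}{\Xi}{\post}{c_2}$; composing these with [\textsc{Seq}] gives $\Equivw{\iseq{c_1}{\iskip}}{\iseq{\iskip}{c_2}}{\pre}{\post}{\iseq{c_1}{c_2}}$, and a final [\textsc{Struct}] step rewriting $\iseq{c_1}{\iskip}$ to $c_1$ and $\iseq{\iskip}{c_2}$ to $c_2$ produces $\Equivw{c_1}{c_2}{\pre}{\post}{\iseq{c_1}{c_2}}$. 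The extracted product program is thus $\iseq{c_1}{c_2}$, which is legitimate precisely because $c_1$ and $c_2$ are separable, so they act on disjoint memory halves and the marginals of $\sem{\iseq{c_1}{c_2}}{m}$ coincide with $\sem{c_1}{m_1}$ and $\sem{c_2}{m_2}$.

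The main obstacle is the loop case of the one-sided lemma, $c = \iwhile{e}{c'}$. The Hoare loop rule provides an invariant $I$ with $\hoare{I \land e}{c'}{I}$, which the induction hypothesis turns into $\Equivw{c'}{\iskip}{I \land e}{I}{c''}$; the natural move is to close with [\textsc{While-L}]. That rule, however, carries a losslessness side condition---$\iwhile{e}{c'}$ must be $I$-lossless---which for a deterministic program means the loop must \emph{terminate} from every memory satisfying $I$. Partial-correctness Hoare logic does not supply this, and the gap is real: no valid product can exist when one program halts while the other diverges, since a weight-one marginal cannot be matched against a null marginal. This is exactly where the ``sufficiently strong termination property'' is needed---equivalently, reading $\hoare{\cdot}{\cdot}{\cdot}$ as total correctness, so that termination of $\iseq{c_1}{c_2}$, and hence of each loop on its invariant, comes for free. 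Under that hypothesis the losslessness premise is discharged and [\textsc{While-L}] applies; notably no genuine interleaving is required here, since the general [\textsc{While}] rule degenerates to its one-sided instance.
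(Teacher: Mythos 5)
Your proposal takes essentially the same route as the paper's proof: reduce the theorem to the one-sided lemma that a Hoare-derivable $\hoare{\pre}{c}{\post}$ yields a derivable $\Equivw{c}{\iskip}{\pre}{\post}{\_}$, proved by induction on the Hoare derivation, then split the hypothesis at an intermediate assertion $\Xi$ and compose the two halves with \rname{Seq} and \rname{Struct}. Your treatment of the loop case is also faithful to the paper's intent: the losslessness premise of \rname{While-L} is exactly discharged by the certain-termination side condition $\cterminate{\pre}{e}{c}$ built into the paper's Hoare \rname{While} rule, which is the ``sufficiently strong termination property'' the theorem presupposes.
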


\begin{proof}[Proof sketch]
  It suffices to prove that if $\hoare{\pre}{c_1}{\post}$ is derivable
  using Hoare logic, then
  $$\Equivw{c_1}{\iskip}{\pre}{\post}{\_}$$ is derivable. The proof
  proceeds by induction on the derivation.
\end{proof}

\subsection{Convergence from Couplings}
The fundamental theorem of couplings (\cref{thm:tv}) gives a powerful method to
prove convergence of random processes. First, we recast it in terms of \Sprprhl.
\begin{proposition}\label{prop:method}
Let $c_1$ and $c_2$ be separable programs and assume that the
following judgment is valid:
$$\Equivw{c_1}{c_2}{\pre}{\post\implies x_1=x_2}{c}$$
Then for every memory $m$ such that $m\models\pre$, we have
$$\TV(\mu_1,\mu_2)\leq \Pr_{x \sim \sem{c}{m}}[\neg \post]$$
where $\mu_1$, $\mu_2$ are the distributions obtained by sampling $x_1$ from
$\sem{c_1}{m}$ and $x_2$ from $\sem{c_2}{m}$ respectively.
\end{proposition}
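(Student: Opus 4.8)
The plan is to reduce the statement to the fundamental theorem of couplings (\cref{thm:tv}), applied not to $\sem{c}{m}$ directly but to its marginal on the pair of output variables $(x_1,x_2)$. First I would unfold the validity hypothesis: fix $m = m_1 \uplus m_2$ with $m \models \pre$. By \cref{def:valid}, validity of the judgment gives
$$\coupledsupp{\sem{c}{m}}{(\sem{c_1}{m_1})}{(\sem{c_2}{m_2})}{(\post \implies x_1 = x_2)},$$
so $\sem{c}{m}$ is a coupling of the two output distributions whose support contains only states where $\post$ implies $x_1 = x_2$.

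Next I would pass to the relevant marginal. Let $\nu \triangleq \proj_{\{x_1,x_2\}}(\sem{c}{m})$, viewed as a sub-distribution over value pairs $(v_1,v_2)$. Since taking marginals composes with projection onto fewer variables, the first marginal of $\nu$ is $\proj_{x_1}(\proj_1(\sem{c}{m})) = \proj_{x_1}(\sem{c_1}{m_1}) = \mu_1$, and symmetrically the second marginal is $\mu_2$. Hence $\coupled{\nu}{\mu_1}{\mu_2}$, i.e.\ $\nu$ is a coupling of $\mu_1$ and $\mu_2$ (both of weight $\wt{\sem{c}{m}}$, so equal-weight as required). Applying \cref{thm:tv} to this coupling yields
$$\TV(\mu_1,\mu_2) \;\leq\; \P{(v_1,v_2)}{\nu}{v_1 \neq v_2}.$$

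Finally I would transfer the support condition to convert the right-hand side into a probability under $\sem{c}{m}$. Because the event $x_1 \neq x_2$ depends only on $x_1$ and $x_2$, we have $\P{(v_1,v_2)}{\nu}{v_1 \neq v_2} = \P{x}{\sem{c}{m}}{x_1 \neq x_2}$. On $\supp(\sem{c}{m})$ the implication $\post \implies x_1 = x_2$ holds, so contrapositively every state there satisfying $x_1 \neq x_2$ also satisfies $\neg\post$; that is, $\{x_1 \neq x_2\} \cap \supp(\sem{c}{m}) \subseteq \{\neg\post\}$. Monotonicity of probability then gives $\P{x}{\sem{c}{m}}{x_1 \neq x_2} \leq \P{x}{\sem{c}{m}}{\neg\post}$, and chaining the three inequalities establishes the claim.

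The routine part is the argument itself; the step that needs the most care is the bookkeeping in the middle paragraph, namely checking that projecting the product coupling onto the two output variables again produces a coupling of the variable-marginals $\mu_1$ and $\mu_2$ (i.e.\ that projection onto a subset of variables commutes with taking each side's marginal). A minor side condition to keep in mind is that \cref{thm:tv} is stated for distributions; here $\mu_1$ and $\mu_2$ inherit equal weight from the coupling $\sem{c}{m}$, so the bound applies verbatim in the lossless case and extends to sub-distributions of equal weight otherwise.
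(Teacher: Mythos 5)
Your proof is correct and matches the paper's intended argument: the paper gives no detailed proof, stating only that the proposition is a direct consequence of the validity of the judgment (via soundness) together with the fundamental theorem of couplings (\cref{thm:tv}), which is precisely what you spell out --- unfolding \cref{def:valid}, projecting the witness onto $(x_1,x_2)$, applying \cref{thm:tv}, and using the support condition to pass from $x_1 \neq x_2$ to $\neg\post$. Your care about the marginal bookkeeping and the equal-weight side condition for sub-distributions is a sound filling-in of details the paper leaves implicit.
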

This result is a direct consequence of the soundness of the logic, and allows
proving convergence in two parts.  First, we use \Sprprhl\ to establish a valid
post-condition of the form $\post\implies x_1=x_2$. Second, we prove that for
every memory $m$ satisfying some pre-condition, the product program $c$ built
from the derivation satisfies
$\Pr_{x \sim \sem{c}{m}}[\neg \post]\leq\beta$.
There are multiple approaches for proving properties of this form---reasoning
directly about the semantics of programs;
existing formalisms for bounding probabilities and reasoning about
expectations (e.g., \citep{Kozen85,Morgan96}); program logics for
probabilistic programs (e.g.,
\citep{Hartog:thesis,Ramshaw79,BGGHS16-icalp}). We will
check the property on pen and paper; mechanizing the proofs
is left for future work.

\section{Application: Convergence of Markov Chains}

We now turn to our first group of examples: proving convergence of probabilistic
processes.  Suppose we have a probabilistic process on a set $\Omega$ of
possible states. At each time step, the process selects the next state
probabilistically. Consider two runs of the same probabilistic process started
from two different states in $\Omega$. We would like to know how many steps we
need to run before the two distributions on states converges to a common
distribution. We consider several classic examples.

\paragraph*{Notation.} Throughout this section, we consider two
copies of the same program. To ensure that the two copies are
separable, we tag all the variables of the first copy with 1, and all
the variables of the second copy with 2. 

\subsection{Simple, Symmetric Random Walk}

\begin{wrapfigure}{left}{.18\textwidth}
\begin{algorithmic}
\State $\pos \gets \start$; $\imath\gets 0$; $l \gets []$;
\While {$\imath < T$}
  \State $\irnd{\jump}{\{ -1,1 \}}$; 
  \State $\pos \gets \pos + \jump$;
  \State $l \gets \jump::l$;
  \State $\imath \gets \imath+ 1$;
\EndWhile
\State \Return $\pos$
\end{algorithmic}
\caption{Random walk}\label{fig:rwalk}
\end{wrapfigure}

Our first example is a simple random walk on the integers. Let the
state space $\Omega$ be $\ZZ$. At each step, the process chooses
uniformly to move left (decreasing the position by $1$) or right
(increasing the position by $1$).  The program $\rwalk$ in
\Cref{fig:rwalk} implements the process, executed for $T$ steps.
The variable $l$ is a \emph{ghost variable}. While it does not
influence the process, it keeps track of the list of sampled values,
and will be used to state assertions when we construct the coupling.

Now, consider the random walk started from starting positions
$\start_1, \start_2$. We want to show that the $\TV$-distance
between the two distributions on positions decreases as we run for
more steps $T$; roughly, the random walk forgets its initial
position. It is not hard to see that if $\start_1 - \start_2$ is
an odd integer, then we will not have convergence: at any timestep
$t$, the support of one distribution will be on even integers while
the support of the other distribution will be on odd integers, so the
$\TV$-distance remains $1$.

When $\start_1 - \start_2 = 2k$ is even, we can construct a 
coupling to show convergence. \citet{BEGHSS15} used \Sprhl\ to couple these
random walks by \emph{mirroring}; informally, the coupled walks make mirror 
moves until they meet, when they make identical moves to stay equal. 
Specifically, they show that
\begin{align*}
  \Equiv{\rwalk_1}{\rwalk_2}{\pre}{
    k\in \psum (\rev(l_1))\implies \pos_1 = \pos_2}
\end{align*}
where $\pre\triangleq \start_1 - \start_2 = 2k$, $\rev(l)$ reverses the list,
and $\psum(l)$ is the list of partial sums of $l$ (sums over its initial
segments). We can lift the judgment to \Sprprhl
\begin{align*}
    \Equivw{\rwalk_1}{\rwalk_2}
    {\pre}
    {k\in \psum (\rev(l_1))\implies \pos_1 = \pos_2}
    {\rwalk_0}
  \end{align*}
  where $\rwalk_0$ is the following product program:
  \begin{algorithmic}
    \State $\pos_1 \gets \start_1$;~ $\pos_2 \gets \start_2$;~
     $\imath_1 \gets 0$;~ $\imath_2 \gets 0$;~
     $l_1 \gets []$;~ $l_2 \gets []$;
    \While {$\imath_1 < T$}
    \If {$\pos_1 = \pos_2$}
    \State $\irnd{(\jump_1,\jump_2)}{\CpId{\{-1,1\}}}$;
    \Else\ \State $\irnd{(\jump_1,\jump_2)}{\CpDist{\opp}{\{-1,1\}}}$; 
    \EndIf
    \State $\iass{\pos_1}{\pos_1 + \jump_1}$;~
           $\iass{\pos_2}{\pos_2 + \jump_2}$;
    \State $\iass{l_1}{\jump::l_1}$;~
           $\iass{l_2}{\jump::l_2}$; 
    \State $\imath_1\gets \imath_1 + 1$;~
           $\imath_2\gets \imath_2 + 1$;
    \EndWhile
    \State \Return $(\pos_1, \pos_2)$
  \end{algorithmic}
  where $\opp x \triangleq -x$.
  
We briefly sketch the derivation of the product.  We start by an
application of the [\textsc{While-S}] rule with the invariant
\begin{align*}
  \Psi &\triangleq k \in \psum (\rev(l_1))\implies \pos_1 = \pos_2 \\
    &\land k \notin \psum (\rev(l_1)) \implies \pos_1 - \pos_2 = 2k - \ssum(l_1) ,
\end{align*}
where $\ssum(l)$ is the sum of the list.  Then, we apply a [\textsc{Seq}] rule
(consuming the first random sampling on each side) with intermediate assertion
$$\Xi \triangleq \post \land (\pos_1 = \pos_2 \implies \jump_1=\jump_2) \land
(\pos_1 \neq \pos_2 \implies \jump_1= -\jump_2)$$
The sub-proof obligation on tails is straightforward, the interesting
one is for the random sampling. We start by using the [\textsc{Case}]
rule with $e \triangleq \pos_1 = \pos_2$, which introduce the
conditional in the product program. If the equality holds, then the
two random values are synchronized using the $\CpId{\{-1,1\}}$; if
not, they are mirrored using $\CpDist{\opp}{\{-1,1\}}$. In our mirror
coupling, $k\in \mathsf{psum}(l_1)$ implies that the walks have
already met, and continue to have the same position.

However, the derivation by itself does not tell us \emph{how far} the
two distributions are, as a function of $T$. To get this information, we will
use the probabilistic product construction and the following classical result
from the theory of random walks.

\begin{theorem}[e.g., \citep{randomwalk:notes}] \label{thm:rw}
  Let $X_0, X_1, \dots$ be a symmetric random walk on the integers with initial
  position $X_0 = 0$. Then, for any position $k \in \mathbb{Z}$, the probability
  that the walk does not reach $k$ within $t$ steps is at most
  \[
    \Pr [ X_0, \dots, X_t \neq k ] \leq \frac{k e \sqrt{2}}{\pi \sqrt{t}} .
  \]
\end{theorem}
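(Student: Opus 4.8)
The plan is to reduce the non-hitting probability to a statement about the one-time law of $X_t$ via the classical \emph{reflection principle}, and then to bound the resulting point masses with a standard central-binomial estimate. First I would dispose of the trivial cases: reflecting the walk about the origin shows the claim is symmetric in the sign of $k$, so I may assume $k \ge 1$, and $k = 0$ is immediate since $X_0 = 0$. Writing $\tau_k$ for the first time the walk reaches $k$ and $M_t = \max_{0 \le s \le t} X_s$ for the running maximum, the event ``$X_0, \dots, X_t \neq k$'' is exactly $\{\tau_k > t\} = \{M_t \le k-1\}$, so it suffices to bound $\Pr[M_t \le k-1]$.

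The key step is reflection. For any $j \le k-1$, reflecting a path about the level $k$ after its first visit to $k$ is a measure-preserving bijection (by symmetry of the walk) between paths with $M_t \ge k$ and $X_t = j$ and paths with $X_t = 2k - j$; hence $\Pr[M_t \ge k,\ X_t = j] = \Pr[X_t = 2k - j]$, and so $\Pr[M_t \le k-1,\ X_t = j] = \Pr[X_t = j] - \Pr[X_t = 2k-j]$. Summing over all $j \le k-1$ and noting that the reflected indices $2k-j$ sweep out exactly the tail $\{m \ge k+1\}$, the sum telescopes to
\[ \Pr[\tau_k > t] = \Pr[X_t \le k-1] - \Pr[X_t \ge k+1] = \Pr[-k \le X_t \le k-1], \]
the last equality using the symmetry $\Pr[X_t \ge k+1] = \Pr[X_t \le -k-1]$. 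I would sanity-check this identity on the cases $(t,k) = (1,1)$ and $(2,1)$, both of which give $1/2$.

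It then remains to bound this probability. Since $X_t$ has the parity of $t$, the interval $[-k, k-1]$ contains exactly $k$ points in the support, so $\Pr[-k \le X_t \le k-1] \le k \cdot \max_m \Pr[X_t = m]$. The maximal point mass is the central one, $\binom{t}{\lfloor t/2 \rfloor} 2^{-t}$, which any explicit Stirling inequality bounds by some $C/\sqrt{t}$; substituting gives a bound of the form $kC/\sqrt{t}$. The sharp estimate yields $C = \sqrt{2/\pi}$, which already suffices, and the constant $e\sqrt{2}/\pi$ appearing in the statement corresponds to one (slightly looser) choice of Stirling bound, so matching it is purely a matter of adopting the same inequality as the cited notes.

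The main obstacle I anticipate is not the analytic estimate, which is routine, but the bookkeeping in the reflection step: tracking parities so the telescoping is exact and no boundary term is silently dropped, and verifying that $\{2k-j : j \le k-1\}$ is precisely the complementary tail $\{m \ge k+1\}$. Everything downstream of the identity for $\Pr[\tau_k > t]$ is a short calculation.
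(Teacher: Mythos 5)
Your argument is sound, but there is nothing in the paper to compare it against: \cref{thm:rw} is imported as a known fact, cited to external lecture notes, and the paper never proves it (it is only \emph{used}, in \cref{thm:rw-converge}, applied to the difference walk $\pos_1-\pos_2$ rescaled by $2$). Taken on its own, your reflection-principle route is the standard one and is essentially complete: the identity $\Pr[\tau_k > t] = \Pr[X_t \le k-1] - \Pr[X_t \ge k+1] = \Pr[-k \le X_t \le k-1]$ is exactly right (though the last reduction is a reindexing of $\{2k-j : j \le k-1\}$ onto $\{m \ge k+1\}$ plus sign symmetry rather than a genuine telescoping), and the parity count of exactly $k$ support points in $[-k,k-1]$ is correct. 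Two details deserve pinning down. First, the clean bound $\binom{2n}{n}4^{-n} \le 1/\sqrt{\pi n}$ yields the constant $\sqrt{2/\pi}$ only for even $t=2n$; for odd $t=2n+1$ the central mass is $\binom{2n}{n}4^{-n}\cdot\tfrac{2n+1}{2n+2} \le \sqrt{2/(\pi t)}\cdot\sqrt{(2n+1)/(2n)}$, so you only get $\sqrt{3/\pi}\,k/\sqrt{t}$ uniformly (with $t=1$ trivial since $e\sqrt{2}/\pi > 1$) --- still comfortably below the quoted $k e\sqrt{2}/(\pi\sqrt{t})$ because $\sqrt{\pi} < e$, but your claim that the sharp constant ``already suffices'' should carry this odd-$t$ caveat. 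Second, for $k<0$ the stated right-hand side is negative, so the theorem as written must be read with $|k|$; your reduction to $k \ge 1$ by reflecting about the origin is the correct repair and is in fact how the paper uses the result. Net assessment: a correct, self-contained proof that delivers a slightly stronger constant than the one the paper quotes, the latter evidently being an artifact of a looser Stirling-type estimate in the cited notes.
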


Now, we can analyze how quickly the two walks mix.

\begin{theorem} \label{thm:rw-converge}
  If we perform a simple random walk for $T$ steps from two positions
  that are $2k$ apart, then the resulting distributions $\mu_1$ and
  $\mu_2$ on final positions satisfy $\TV(\mu_1, \mu_2) \leq \frac{k e
    \sqrt{2} }{\pi \sqrt{T}}$.  Formally, for every two memories $m_1$
  and $m_2$ such that $m_1(\pos)-m_2(\pos)= 2 k$, we have
  \[
  \TV(\sem{\rwalk}{m_1},\sem{\rwalk}{m_2}) 
  \leq
  \frac{k e \sqrt{2} }{\pi \sqrt{T}},
  \]
\end{theorem}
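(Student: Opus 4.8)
The plan is to combine three ingredients: the coupling derivation already established in \Sprprhl, \cref{prop:method} (the recast fundamental theorem of couplings), and \cref{thm:rw} (the hitting-time bound for a single symmetric random walk). The overall strategy is to reduce the $\TV$-distance between the two walks to the probability that a \emph{single} auxiliary random walk fails to hit a fixed target within $T$ steps, and then invoke \cref{thm:rw} directly.

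First I would set up the reduction via \cref{prop:method}. The valid judgment
$$\Equivw{\rwalk_1}{\rwalk_2}{\pre}{k\in \psum (\rev(l_1))\implies \pos_1 = \pos_2}{\rwalk_0}$$
is exactly of the form required by \cref{prop:method}, with $\post \triangleq k \in \psum(\rev(l_1))$ and the final equality $x_1 = x_2$ instantiated as $\pos_1 = \pos_2$. Thus for any initial memory $m$ satisfying $\pre$ (i.e.\ $\start_1 - \start_2 = 2k$), \cref{prop:method} yields
$$\TV(\sem{\rwalk}{m_1}, \sem{\rwalk}{m_2}) \leq \Pr_{x \sim \sem{\rwalk_0}{m}}[k \notin \psum(\rev(l_1))].$$
So it remains only to bound this latter probability in the product program by $\frac{k e \sqrt{2}}{\pi \sqrt{T}}$.

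Second, I would analyze the distribution of $\psum(\rev(l_1))$ under $\sem{\rwalk_0}{m}$. The key observation is that, in the product $\rwalk_0$, the variable $\jump_1$ is sampled uniformly from $\{-1,1\}$ on every iteration regardless of which branch of the conditional is taken: in the synchronized branch $\jump_1$ is drawn via $\CpId{\{-1,1\}}$ (uniform marginal) and in the mirrored branch via $\CpDist{\opp}{\{-1,1\}}$ (also uniform marginal). Hence the sequence of $\jump_1$ values recorded in $l_1$ is distributed exactly as the increments of a single symmetric random walk $X_0, X_1, \dots, X_T$ started at $0$, and $\psum(\rev(l_1))$ is precisely the set of partial sums $\{X_1, \dots, X_T\}$ visited by that walk. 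The event $k \in \psum(\rev(l_1))$ therefore coincides with the event that this auxiliary walk reaches position $k$ at some point within $T$ steps. Applying \cref{thm:rw} with target $k$ gives
$$\Pr[k \notin \psum(\rev(l_1))] = \Pr[X_0, \dots, X_T \neq k] \leq \frac{k e \sqrt{2}}{\pi \sqrt{T}},$$
which chains with the previous inequality to complete the proof.

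The main obstacle I anticipate is the second step: making rigorous the claim that $\psum(\rev(l_1))$ under the product program has the law of the hitting record of a single symmetric walk. This requires verifying that the first marginal of $\sem{\rwalk_0}{m}$ on the $\jump_1$/$l_1$ variables is genuinely an unbiased walk independent of the conditional branching—intuitively clear because both couplings $\CpId{\{-1,1\}}$ and $\CpDist{\opp}{\{-1,1\}}$ have uniform first marginal, but one must confirm that the branch choice (which depends on $\pos_1 = \pos_2$, itself a function of past $\jump_1$ values) does not bias the current increment. This follows because at each step the first marginal is uniform \emph{conditional on the entire history}, so the increments remain i.i.d.\ uniform. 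The remaining bookkeeping—matching $\psum(\rev(l_1))$ to the visited positions and tracking that $\pos_1$ tracks the target offset $k$ correctly via the invariant $\Psi$—is routine. Finally I would note that the odd case ($\start_1 - \start_2$ odd) is outside the theorem's hypothesis, so no convergence claim is made there, consistent with the earlier parity discussion.
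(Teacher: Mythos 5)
Your proposal is correct and takes essentially the same route as the paper: the paper's proof likewise chains \cref{prop:method} with \cref{thm:rw} applied to the product program $\rwalk_0$. The only cosmetic difference is how the auxiliary walk is identified---the paper views the difference $\pos_1 - \pos_2$ as a $\pm 2$-step walk hitting $0$, whereas you observe that both couplings have uniform first marginal so the partial sums of $l_1$ form a $\pm 1$-step walk hitting $k$; these describe the same event up to rescaling, and your marginal argument (which also follows immediately from soundness, since $\proj_1(\sem{\rwalk_0}{m}) = \sem{\rwalk_1}{m_1}$) is, if anything, slightly more careful about the adaptivity of the branch choice.
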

\begin{proof}
  Conceptually, we can think of the difference $\pos_1 - \pos_2$ as a random
  walk which increases by $2$ with half probability and decreases by $2$ with
  half probability.  By applying \cref{thm:rw} to this random walk, we find that
  in the product program $c_0$,
  \[
    \Pr [ k \notin \psum (l_1) ] \leq \frac{k e \sqrt{2}}{\pi \sqrt{T}} .
  \]
  Then, we can conclude by \cref{prop:method}.
\end{proof}

\subsection{The Dynkin Process}
Our second example models a process called the \emph{Dynkin process}.
There is a sequence of $N$ concealed cards, each with a number drawn
uniformly at random from $\{ 1, \dots, 10 \}$. A player starts at some
position in $\{ 1, \dots, 10 \}$.  Repeatedly, the player looks at the
number at their current position, and moves forward that many steps.
For instance, if the player reveals $2$ at their current location,
then she moves forward two spaces. The player stops when she passes
the last card of the sequence. We want to show fast convergence of
this process if we start from any two initial positions. In code, the
Dynkin process is captured by the program $\dynkin$ defined in
\Cref{fig:dynkin}. Here, $l$ stores the history of positions of the
player; this ghost variable will be useful both for writing assertions
about the coupling, and for assertions in the product program.  Just
as for random walks, we can consider the mixing rate of this
process, starting from two positions $\start_1, \start_2$.
\begin{wrapfigure}[8]{left}{.20\textwidth}
\begin{algorithmic}
\State $\pos \gets \start$; $l \gets [\pos]$;
\While {$\pos < N$}
\State $\irnd {\jump} {\intv{1}{10}}$;
\State $\pos \gets \pos + \jump$;
\State $l \gets \pos::l$;
\EndWhile
\State \Return $\pos$
\end{algorithmic}
\caption{Dynkin process}\label{fig:dynkin}
\end{wrapfigure}
We will couple two runs of the Dynkin process from two starting
positions asynchronously: we will move whichever
process is behind, holding the other process temporarily fixed.  If
both processes are at the same position, then they move together.

\vspace*{1.8em}

Formally, we 
prove the following \Sprprhl judgment:
$$
  \Equivw{\dynkin_1}{\dynkin_2}{\top}{(l_1 \cap l_2 \neq \emptyset) \implies
    \pos_1=\pos_2}{\dynkin_0}
$$
where $\dynkin_0$ is the following program:
  \begin{center}
    \begin{algorithmic}
      \State $\pos_1 \gets \start_1$;~ $\pos_2 \gets \start_2$;
      \State $l_1 \gets {[\pos_1]}$;~ $l_2 \gets {[\pos_2]}$;
      \While {$\pos_1 < N \lor \pos_2 < N$}
      \If {$\pos_1 = \pos_2$}
      \State $\irnd{(\jump_1,\jump_2)}{\CpId{\intv{1}{10}}}$; 
      \State $\pos_1 \gets \pos_1 + \jump_1$;~
             $\pos_2 \gets \pos_2 + \jump_2$;
      \State $l_1 \gets \pos_1 :: l_1$;~  $l_2 \gets \pos_2 :: l_2$; 
      \Else \If {$\pos_1 < \pos_2$}
      \State $\irnd{\jump_1}{\intv{1}{10}}$;
      \State $\pos_1 \gets \pos_1 + \jump_1$;
      \State $l_1 \gets \pos_1::l_1$;
      \Else
      \State $\irnd{\jump_2}{\intv{1}{10}}$;
      \State $\pos_2 \gets \pos_2 + \jump_2$;
      \State $l_2 \gets \pos_2::l_2$;
      \EndIf
      \EndIf
      \EndWhile
      \State \Return $(\pos_1, \pos_2)$
    \end{algorithmic}
  \end{center}
 To couple the two programs, we use the rule  
 [\textsc{While}] with $k_1 = k_2 = 1$. To control which process
 will advance, we define:
  \begin{mathpar}
    p_0 \triangleq (\pos_1 = \pos_2)
    \and
    p_1 \triangleq (\pos_1 < \pos_2)
    \and
    p_2 \triangleq (\pos_1 > \pos_2) .
  \end{mathpar}
The lossless conditions are satisfied since the distance between
$x_1$ and $x_2$ strictly decreases at each iteration of the loops.
For the samplings, when $p_1$ or $p_2$ hold we use the one-sided rules for
random sampling ([\textsc{Rand-L}] and corresponding [\textsc{Rand-R}]);
when $p_0$ holds, we use the identity coupling.

\begin{theorem} \label{thm:dynkin-converge}
  Let $m_1$ and $m_2$ two memories such that \\
  $m_1(\start), m_2(\start)\in \intv{1}{10}$, and suppose $N > 10$. Then:
  \[
    \TV(\sem{\dynkin}{m_1},\sem{\dynkin}{m_2}) \leq (\nicefrac{9}{10})^{\nicefrac{N}{5} - 2} .
  \]
  The two distributions converge exponentially fast as $N$ grows.
\end{theorem}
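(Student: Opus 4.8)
The plan is to combine \cref{prop:method} with a direct probabilistic analysis of the product program $\dynkin_0$. First I would instantiate \cref{prop:method} on the valid judgment established just above, taking $x_1 = \pos_1$, $x_2 = \pos_2$ and the coupling invariant $\post = (l_1 \cap l_2 \neq \emptyset)$. This immediately yields
\[
  \TV(\sem{\dynkin}{m_1}, \sem{\dynkin}{m_2}) \le \Pr_{x \sim \sem{\dynkin_0}{m}}[\neg\post],
\]
so the whole theorem reduces to bounding, by $(\nicefrac{9}{10})^{\nicefrac{N}{5}-2}$, the probability that the two coupled trajectories \emph{never meet}, i.e.\ that $l_1 \cap l_2 = \emptyset$ at the end of the run.

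The key structural fact I would establish is a \emph{gap invariant}: as long as the two processes have not met, $\dynkin_0$ always advances whichever process is behind, and the distance $d = |\pos_1 - \pos_2|$ stays in $\{1,\dots,10\}$. This is a routine induction on the loop iterations: the starting positions both lie in $\intv{1}{10}$, so initially $d \le 9$; and when the lagging process jumps by $\jump \in \intv{1}{10}$, it either lands exactly on the leader (so $d$ becomes $0$ and they meet), lands short (so $0 < d' < d \le 10$), or overtakes by at most $10 - d$ (so again $0 < d' \le 9$). Because the gap thus always satisfies $d \le 10$ and the jump is drawn uniformly from $\intv{1}{10}$ \emph{independently of the past}, at every move of the lagging process the probability of landing exactly on the leader—hence of meeting—is exactly $\nicefrac{1}{10}$ conditioned on an arbitrary history. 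Each loop iteration before meeting is therefore a meeting trial with conditional failure probability $\nicefrac{9}{10}$.

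It then remains to lower-bound the number of such trials. Each move increases the sum $\pos_1 + \pos_2$ by at most $10$, while the loop continues until both coordinates reach $N$, i.e.\ until $\pos_1 + \pos_2 \ge 2N$. Since $N > 10$ guarantees both starting positions are strictly below $N$ and $\pos_1 + \pos_2 \le 20$ initially, on the event that the processes never meet the run must execute at least $(2N-20)/10 = \nicefrac{N}{5} - 2$ moves. Consequently the never-meet event is contained in the event that the first $\nicefrac{N}{5}-2$ meeting trials all fail, which by the chain rule over the per-step conditional probabilities has probability at most $(\nicefrac{9}{10})^{\nicefrac{N}{5}-2}$; combining with the reduction above gives the claimed bound.

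The main obstacle is making the per-step analysis precise: one must argue that the meeting probability is \emph{exactly} $\nicefrac{1}{10}$ conditioned on an arbitrary history, which relies crucially on the gap invariant $d \le 10$ (if $d$ could exceed $10$, a single jump could not reach the leader and the clean $\nicefrac{9}{10}$ survival factor would break down). A secondary subtlety is that the number of moves is itself random, so the counting must produce a \emph{deterministic} lower bound on it over the never-meet event rather than a bound in expectation; the monotone progress of $\pos_1 + \pos_2$ is exactly what makes this possible.
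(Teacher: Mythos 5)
Your proposal matches the paper's proof essentially step for step: both reduce via \cref{prop:method} to bounding $\Pr[l_1 \cap l_2 = \emptyset]$ in the product program $\dynkin_0$, both use the gap invariant (the paper states $|\pos_1-\pos_2|<10$; your $d \le 10$ works equally well, and your own case analysis in fact yields $d' \le 9$) to get an exact $\nicefrac{1}{10}$ hitting chance at each move of the lagging process, and both count at least $\nicefrac{N}{5}-2$ such trials before the loop can terminate. Your write-up is somewhat more careful than the paper's sketch, making explicit the conditioning on the history and the deterministic lower bound on the number of trials over the never-meet event, but it is the same argument.
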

\begin{proof}
  In the product program $\dynkin_0$, we want to bound the probability
  that $l_1$ and $l_2$ are disjoint; i.e., the probability that the
  two processes never meet. We proceed in two steps. First, it is not
  hard to show that $|\pos_1 - \pos_2| < 10$ is an invariant.  Thus,
  at each iteration, there is a $\nicefrac{1}{10}$ chance that the
  lagging process hits the leading process. Second, each process moves
  at most $10$ positions each iteration and we finish when both
  processes reach the end, so there are at least $2(\nicefrac{N}{10} -
  1) = \nicefrac{N}{5} - 2$ chances to hit.  Therefore, in the product
  we can show
  \begin{equation} \label{eq:dynkin-not-meet}
    \Pr[ (l_1 \cap l_2) = \emptyset ] \leq (\nicefrac{9}{10})^{\nicefrac{N}{5} - 2} .
  \end{equation}
  We can then conclude by \cref{prop:method}.
\end{proof}

To highlight the quantitative information verified by our approach, we note that
the corresponding theorem for random walks (\cref{thm:rw-converge}) shows that
the total variation distance between two random walks decreases as $O(1/\sqrt{T})$. In contrast,
\cref{thm:dynkin-converge} shows that total variation distance between two
Dynkin processes converges as $O(0.9^N)$, giving much faster mixing
(exponentially fast instead of polynomially fast).

\begin{remark*}
  The Dynkin process is inspired by the following two-player game. Both players
  pick a starting position. There is one sequence of random cards that is shared
  by both players, and players look at the card at their current position and
  move forward that many spaces. The random cards are shared, so a player
  samples the card only if the other player has not yet visited the position. If
  the other player has already landed on the position, the later player observes
  the revealed card and moves forward.

  While the Dynkin process samples every card that the player lands on, it is
  not hard to see that the product program exactly models the two-player game.
  More specifically, the product interleaves the players so that at each turn,
  the player that is lagging behind makes the next move. By this scheduling, as
  long as one player is lagging behind, the players have not landed on the same
  position and so each player lands on unseen cards and draws random samples to
  reveal. Once the players meet, the product program makes the same move for
  both players.

  Under this interpretation, \cref{eq:dynkin-not-meet} bounds the probability
  that the players do not land on the same final position from any two initial
  positions.  This result is the basis of the magic trick called \emph{Dynkin's
    card trick} or the \emph{Kruskal count}. If one player is the magician and
  the other player is a spectator, if the spectator starts at a secret position
  and runs the process mentally, the magician can guess the correct final
  position with high probability by starting at any position and counting along.
\end{remark*}

\section{Application: Path Coupling}

So far, we have seen how to prove convergence of probabilistic processes by
constructing a coupling, reasoning about the probability of the processes not
meeting, and applying the coupling theorem. While this reasoning is quite
powerful, for more complicated processes it may be difficult to directly construct a
coupling that shows fast mixing. For example, it can be difficult to find and
reason about a coupling on the distributions from two states $s, s'$ if there
are many transitions apart in the Markov chain.

To address this problem, \citet{bubley1997path} proposed the \emph{path
  coupling} technique, which allows us to consider just pairs of \emph{adjacent}
states, that is states where $s$ can transition in one step to $s'$. Roughly
speaking, if we can give a good coupling on the distributions from two adjacent
states for every pair of adjacent states, then path coupling shows that the
state distributions started from two arbitrary states converge quickly.

As the name suggests, path coupling considers paths of states in a probabilistic
process. For this to make sense, we need to equip the state space with
additional structure.  For the basic setup, let $\OState$ be a finite set of
states and suppose that we have a metric $d : \OState \times \OState \to \NN$.
We require that $d$ is a \emph{path metric}: if $d(s, s') > 1$, then there
exists $s'' \neq s, s'$ such that $d(s, s') = d(s, s'') + d(s'', s')$. Two
states are said to be \emph{adjacent} if $d(s, s') = 1$. We will assume that the
\emph{diameter} of the state space, i.e. the distance between any two states, is
finite: $\Delta < \infty$. The Markov chain is then defined by iterating a
transition function $f : \OState \to \Dist(\OState)$.

The main idea behind path couplings is that if we can couple the
distributions from any two adjacent states, then there exists a
coupling for the distributions from two states at an arbitrary
distance, constructed by piecing together the couplings between them.
Furthermore, if the expected distance between states contracts
under the coupling on adjacent states, i.e., the resulting expected
distance is strictly less than $1$, then the same holds for the
coupling on two states at any distance. More formally, we have the
following main theorem of path coupling. 
\begin{lemma}[\citet{bubley1997path}]
  \label{lem:pathcoupling}
  Consider a Markov chain with transition function $f$ over a set
  $\OState$ with diameter at most $\Delta$.
  Suppose that for any two states $s$ and $s'$ such that $d(s, s') =
  1$, there exists a coupling $\mu$ of $f(s), f(s')$ such that
  $\E {(r, r')} {\mu} {d(r, r')} \leq \beta$.

  Then, starting from any two states $s$ and $s'$ and running $t$ steps of the
  chain, there is a coupling $\mu$ of $f^t(s), f^t(s')$ such that
  \begin{align*}
    \TV(f^t(s), f^t(s')) \leq \P {(r, r')} {\mu} {r \neq r'} \leq \beta^t \Delta.
  \end{align*}
\end{lemma}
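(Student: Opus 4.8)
The plan is to prove the path coupling lemma in two stages, mirroring its two conclusions. First I would establish the existence of a coupling $\mu$ of $f^t(s), f^t(s')$ for \emph{arbitrary} states $s, s'$ (not just adjacent ones) satisfying the expected-distance contraction $\E{(r,r')}{\mu}{d(r,r')} \leq \beta^t \Delta$. Second, I would feed this coupling into the fundamental theorem of couplings (\cref{thm:tv}) together with a Markov-type inequality to obtain the stated total-variation bound.

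For the first stage, the key construction is to lift the adjacent-state coupling hypothesis to arbitrary pairs by gluing along shortest paths. Given any $s, s'$ with $d(s,s') = D$, the path-metric property lets me pick intermediate states $s = s_0, s_1, \dots, s_D = s'$ with each consecutive pair adjacent and $d(s,s') = \sum_i d(s_i, s_{i+1})$. By hypothesis each adjacent pair $(s_i, s_{i+1})$ has a coupling $\mu_i$ of $f(s_i), f(s_{i+1})$ with expected distance at most $\beta$. The gluing lemma for couplings (transitivity of the coupling relation, via composing couplings through a common marginal) lets me compose these $D$ couplings into a single coupling $\nu$ of $f(s), f(s')$. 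By the triangle inequality for $d$ and linearity of expectation, $\E{(r,r')}{\nu}{d(r,r')} \leq \sum_i \E{}{}{d} \leq D \beta \leq \Delta \beta$. I would then iterate this one-step construction $t$ times: composing the one-step couplings sequentially (again by gluing through intermediate marginals) yields a coupling of $f^t(s), f^t(s')$, and taking expectations and unfolding the recurrence gives the contraction $\E{}{}{d(r,r')} \leq \beta^t \Delta$. The cleanest way to organize this is probably an induction on $t$, where at each step I apply the single-step gluing bound to every pair in the support of the current coupling and use the tower property of expectation to track the expected distance.

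For the second stage, I would combine two facts about the coupling $\mu$ of $f^t(s), f^t(s')$ just constructed. The fundamental theorem (\cref{thm:tv}) gives $\TV(f^t(s), f^t(s')) \leq \P{(r,r')}{\mu}{r \neq r'}$. Since $d$ takes values in $\NN$ and $d(r,r') = 0$ exactly when $r = r'$, the event $r \neq r'$ is the event $d(r,r') \geq 1$, so Markov's inequality gives $\P{(r,r')}{\mu}{r \neq r'} = \P{(r,r')}{\mu}{d(r,r') \geq 1} \leq \E{(r,r')}{\mu}{d(r,r')} \leq \beta^t \Delta$. Chaining these inequalities yields the displayed bound.

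I expect the main obstacle to be the gluing/composition of couplings: showing rigorously that couplings compose transitively (if $\mu$ couples $\alpha,\beta$ and $\nu$ couples $\beta,\gamma$, then one can build a coupling of $\alpha,\gamma$), and that this composition interacts correctly with the metric $d$ so that expected distances add up along the path. This requires care about the discrete-subdistribution bookkeeping and the fact that the intermediate marginals match exactly. The probabilistic-product and expectation manipulations in the second stage are routine by comparison; the real content, and the place where the path-metric hypothesis is essential, is the path-gluing argument that transfers the adjacent-state contraction to arbitrary pairs.
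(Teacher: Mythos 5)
The paper itself offers no proof of this lemma---it is imported directly from \citet{bubley1997path}---so there is no in-paper argument to compare against; your proposal is the standard Bubley--Dyer proof, and it is correct. The structure is right: glue the adjacent-pair couplings along a shortest path (which exists by the path-metric property plus induction on $d(s,s')$, since $d$ is $\NN$-valued) to get, for \emph{any} pair at distance $D$, a one-step coupling with expected distance at most $\beta D$; iterate by binding through intermediate marginals with the tower property to get $\E{(r,r')}{\mu}{d(r,r')} \leq \beta^t d(s,s') \leq \beta^t \Delta$; and finish via \cref{thm:tv} together with Markov's inequality, using that $d(r,r') = 0$ iff $r = r'$. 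Two points deserve explicit care in a write-up. First, the induction genuinely requires the per-pair contraction $\beta\, d(s,s')$; the weaker uniform bound $\beta\Delta$ that you also record does not iterate (it would give $\beta\Delta$ at every step, not $\beta^t\Delta$), and your phrasing ``apply the single-step gluing bound to every pair in the support and use the tower property'' is the correct formulation---make sure that, not the $\beta\Delta$ bound, is what the induction invokes. Second, the hypothesis supplies couplings only for pairs at distance exactly $1$, so pairs with $r = r'$ arising in the support of the running coupling must be handled separately by the identity coupling $\CpId{f(r)}$, which has expected distance $0 = \beta \cdot 0$ and keeps the recurrence valid. The gluing step you flag as the main obstacle is routine in the discrete setting: if $\mu$ couples $(\alpha,\theta)$ and $\nu$ couples $(\theta,\gamma)$ with common middle marginal $\theta$, then $\lambda(a,c) = \sum_{b :\, \theta(b) > 0} \mu(a,b)\,\nu(b,c)/\theta(b)$ is a coupling of $(\alpha,\gamma)$, and iterating this construction along the path yields the joint distribution on tuples $(r_0,\dots,r_D)$ needed for your triangle-inequality step.
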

This lemma applies for all $\beta$, but is most interesting for $\beta
< 1$ when it implies that the Markov chain mixes quickly. With the
main theorem in hand, we will show how to verify the conditions for
path coupling on two examples from \citet{bubley1997path}.

\subsection{Graph Coloring: the Glauber Dynamics}

Our first example is a Markov chain to provide approximately uniform
samples from the set of colorings of a finite graph; it was first analyzed by
\citet{DBLP:journals/rsa/Jerrum95}, our analysis follows \citet{bubley1997path}.
Recall that a
finite \emph{graph} $G$ is defined by a finite set of vertices $V$,
and a symmetric relation $E$ relating pairs vertices that are
connected by an edge; we will let $\neighbors{G}(v) \subseteq V$
denote the set of neighbors of $v$, i.e the set vertices that have an
edge to $v$ in $G$.  Let $C$ be the set of \emph{colors}; throughout,
we assume that $C$ is finite.  A \emph{coloring} $w$ of $G$ is a map
from $V$ to $C$. A coloring is \emph{valid} (sometimes called
\emph{proper}) if all neighboring vertices have different colors: for
all $v' \in \neighbors{G}(v)$ we have $w(v) \neq w(v')$. We write
$\valid{G}(w)$ if $w$ is a valid coloring.
The following program $\glauber(T)$ models $T$ steps of the
\emph{Glauber dynamics} in statistical physics:
\begin{center}
  \begin{algorithmic}
$\imath\gets 0$;
\While{$\imath<T$}
\State $\irnd{v}{V}$; $\irnd{c}{C}$;
\State \SIfT {$\valid{G}(w[v \mapsto c])$}{$w \gets w[v \mapsto c]$};
\State $\imath\gets \imath+1$
\EndWhile
\end{algorithmic}
\end{center}
Informally, the algorithm starts from a valid coloring $w$
and iteratively modifies it by sampling uniformly a vertex $v$
and a color $c$, recoloring the $v$ with $c$ if this continues to be a valid
coloring.

We want to measure the statistical distance between two executions of the
process starting from two initial colorings $w_1$ and $w_2$. There are two
natural approaches. The first option is to couple the two copies of $\glauber$
directly, analyze the product program and apply \cref{prop:method}. The problem
is that when the two colorings are far apart, it is hard to reason about how the
processes might meet under a coupling; \citet{DBLP:journals/rsa/Jerrum95}
adopted this strategy, but the resulting proof is dense and complex.

The second, far simpler option is to apply path coupling. Here, we build a product for just
one iteration of the loop, and it suffices to consider cases where the two
initial states are adjacent. This drastically simplifies the coupling and
analysis of the product program, so we adopt this approach here.  For the sake
of clarity, we adapt the transition function so that its output is stored in a
fresh variable $w'$, and call the resulting program $\glauber^\dagger$:
\begin{center}
\begin{algorithmic}
\State $\irnd{v}{V}$; $\irnd{c}{C}$;
\State \SIf {$\valid{G}(w[v \mapsto c])$}{$w' \gets w[v \mapsto c]$}{$w' \gets w$}
\end{algorithmic}
\end{center}
Note that $\glauber^\dagger; w\gets w'; \imath\gets \imath +1$ is semantically
equivalent to the loop body of $\glauber$.

To apply the path coupling theorem \cref{lem:pathcoupling}, we need to
define a path metric on $\Omega$ and construct a coupling for the
process started from two adjacent states. For the path metric, we
define the distance $d(w_1, w_2)$ to be the Hamming distance: the
number of vertices where $w_1$ and $w_2$ provide different colors. We
say two states are \emph{adjacent} if $d(w_1, w_2) = 1$; these
states differ in the color of exactly one vertex.  In order to apply
path coupling, we need to find a coupling of the transition function
on adjacent states such that the expected distance shrinks. We first
build the coupling using \Sprprhl. Specifically, we prove
\begin{equation} \label{eq:glbr-judge}
  \Equivw{\glauber_1^\dagger}{\glauber_2^\dagger}{d(w_1, w_2) = 1}{d(w_1', w_2') \leq 2}{c_0}
\end{equation}
where $c_0$ is the following program:
\begin{center}
\begin{algorithmic}
\State $\irnd{v_1,v_2}{\CpId{V}}$;
\State \SIf{$v_1 \in \neighbors{G}(v_0)$}{$\irnd{c_1,c_2}{\CpDist{\pi^{\ab}}{C}}$}
                 {$\irnd{c_1,c_2}{\CpId{C}}$};
\State \SIf {$\valid{G}(w_1[v_1 \mapsto c_1])$}
            {$w'_1 \gets w_1[v_1 \mapsto c_1]$}
            {$w'_1 \gets w_1$};
\State \SIf {$\valid{G}(w_2[v_2 \mapsto c_2])$}
            {$w'_2 \gets w_2[v_2 \mapsto c_2]$}
            {$w'_2 \gets w_2$}
\end{algorithmic}
\end{center}

We briefly sketch how to prove the judgment. Note that the
two states must agree at all vertices, except at a single vertex
$v_0$. Let $w_1(v_0) = a$ and $w_2(v_0) = b$. First, we couple the
vertex sampling with the rule [\textsc{Rand}] using the identity
coupling, ensuring $v_1 = v_2$. Then, we use the rule [\textsc{Case}]
to perform a case analysis on the sampled vertex, call it $v$.  If $v$
is a neighbor of the differing vertex $v_0$, we use the rule
[\textsc{Rand}] and the transposition bijection $\pi^{\ab} : C \to C$
defined by the clauses:
  \begin{mathpar}
    \pi^{\ab}(a) = b
    \and
    \pi^{\ab}(b) = a
    \and
    \pi^{\ab}(x) = x \quad \text{otherwise}
  \end{mathpar}
to ensure that $c_1 = \pi^{\ab}(c_2)$.  Otherwise, we use the rule
[\textsc{Rand}] and the identity coupling to ensure $c_1 = c_2$. By
applying the one-sided rules for conditionals to the left and the
right sides ([\textsc{Cond-L}] and [\textsc{Cond-R}]), we can
conclude the derivation.
  
Next, we must reason about the expected value of the distance between
$w'_1$ and $w'_2$ after executing the product program.
\begin{lemma}
  Let $n = |V|$ and $k = |C|$, and suppose that the graph $G$ has
  degree bounded by $D$.  That is, for any $v \in V$, there are at
  most $D$ vertices $v'$ such that $E(v, v')$. If $k \geq D$ , then
  there is a coupling $\mu$ of the distributions after running
  $\glauber^\dagger$ on adjacent states such that
  $$
    \sE {\mu} {d(w_1', w_2')}  \leq 1 - 1/n + 2D/kn.
  $$
\end{lemma}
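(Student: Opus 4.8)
The plan is to compute $\sE{\mu}{d(w_1',w_2')}$ directly from the product program $c_0$ by conditioning on the vertex $v \triangleq v_1 = v_2$, which the identity coupling $\CpId{V}$ draws uniformly from $V$. Write $v_0$ for the unique vertex where $w_1$ and $w_2$ differ (so $d(w_1,w_2)=1$) and set $a = w_1(v_0)$, $b = w_2(v_0)$ with $a \neq b$, as in the derivation sketch. Two structural observations drive the argument. First, $w_1$ and $w_2$ agree on every vertex other than $v_0$, so for any $v \neq v_0$ the two copies see identical colors on the neighbours of $v$, \emph{except} that $v_0$ (when it is a neighbour of $v$) appears colored $a$ on the left and $b$ on the right. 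Second, a single step recolors only $v$, so the disagreement at $v_0$ can only be removed (when $v = v_0$) and at most one fresh disagreement, located at $v$, can be created; hence $d(w_1',w_2') \in \{0,1,2\}$ always, which already gives the support bound in \cref{eq:glbr-judge}. For $v \neq v_0$ I write $q_v$ for the probability, conditioned on this $v$ being chosen, that the step creates a disagreement at $v$.

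I would then dispatch the two easy cases. When $v \notin \neighbors{G}(v_0) \cup \{v_0\}$ the coupling is $\CpId{C}$ and $v_0$ is not a neighbour of $v$, so the two copies run identical validity tests on $v$ and reach the same value: $q_v = 0$ and $d$ stays $1$. When $v = v_0$ the coupling is again $\CpId{C}$, and since the neighbours of $v_0$ are colored identically on both sides, the common sampled color $c$ is valid on the left iff it is valid on the right. Writing $S_0$ for the set of colors on $\neighbors{G}(v_0)$ (so $|S_0| \le D$), both copies recolor $v_0$ to $c$ when $c \notin S_0$ (giving $d=0$) and both keep their old colors $a,b$ otherwise (giving $d=1$), whence the conditional expectation is $|S_0|/k \le D/k$. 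This is the \emph{only} source of decrease and produces the $-1/n$ term.

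The delicate case is $v \in \neighbors{G}(v_0)$, where the coupling is the transposition $c_1 = \pi^{\ab}(c_2)$. Setting $c \triangleq c_2$, I would argue that a fresh disagreement at $v$ can arise \emph{only} when $c = a$. Indeed, if $c \notin \{a,b\}$ then $c_1 = c_2 = c$, and as the sole discrepancy between the two neighbour-colorings of $v$ is $v_0$'s color $a$ versus $b$ (neither equal to $c$), the validity tests agree and both sides reach the same value. If $c = b$ then $c_1 = \pi^{\ab}(b) = a$; recoloring $v$ to $a$ on the left conflicts with $v_0$ (colored $a$ there) and recoloring $v$ to $b$ on the right conflicts with $v_0$ (colored $b$ there), so both attempts fail and both retain the common old color. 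Thus only $c = a$, of probability $1/k$, can leave the two copies with different colors at $v$, giving $q_v \le 1/k$. Summing over the at most $\deg(v_0) \le D$ neighbours bounds the total fresh-disagreement mass by $D/(nk)$.

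Combining the three cases,
\begin{align*}
  \sE{\mu}{d(w_1',w_2')}
   &= \tfrac{1}{n}\,\tfrac{|S_0|}{k}
      + \tfrac{1}{n}\sum_{v \in \neighbors{G}(v_0)}(1 + q_v)
      + \tfrac{n-1-\deg(v_0)}{n} \\
   &= 1 - \tfrac{1}{n} + \tfrac{|S_0|}{nk}
      + \tfrac{1}{n}\sum_{v \in \neighbors{G}(v_0)} q_v ,
\end{align*}
and bounding $|S_0| \le D$ together with $q_v \le 1/k$ over $\deg(v_0) \le D$ neighbours yields the claimed $1 - 1/n + 2D/(kn)$; the hypothesis $k \ge D$ ensures the recoloring of $v_0$ can succeed, so the chain is well defined. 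The hard part is the bookkeeping of the transposition case: one must track, in each of the three color regimes, which of the two recoloring attempts succeeds, and verify that the swap $a \leftrightarrow b$ is exactly what makes the $c = b$ attempts fail in lockstep, isolating $c = a$ as the unique bad color. Everything else reduces to routine counting.
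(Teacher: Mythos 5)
Your proof is correct and takes essentially the same route as the paper's: the same product-program coupling, with the decrease coming from the event $v = v_0$ together with a color valid at $v_0$ (probability at least $\frac{1}{n}(1 - \frac{D}{k})$) and the increase coming only from the event $v \in \neighbors{G}(v_0)$ with $c_1 = b$, i.e.\ $c_2 = a$ (probability at most $\frac{D}{nk}$). The paper merely compresses your three-way conditioning on the sampled vertex into the identity $\sE{\mu}{d'} = 1 - \Pr[d'=0] + \Pr[d'=2]$ and bounds those two probabilities by exactly the events you isolate; your explicit verification that $c = b$ makes both recoloring attempts fail in lockstep under the transposition $\pi^{\ab}$ is the bookkeeping the paper leaves implicit.
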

\begin{proof}
  Let $w_1$ and $w_2$ be adjacent states. We must bound the expected distance
  between the states $w_1', w_2'$ in the product program. Let $d' = d(w_1',
  w_2')$, we have:
  \begin{align*}
    & \sE {\mu} {d'} =
        0 \cdot \Pr[d'=0] + 1 \cdot \Pr[d'=1] + 2 \cdot \Pr[d'=2] \\
    &\quad = 1 - \Pr [  d' = 0 ] + \Pr [  d' = 2 ] \\
    &\quad \leq 1 - \Pr [ v_1 = v_0 \land \valid{G}(w' )] 
            + \Pr [ v_1 \in \neighbors{G}(v_0) \land c_1 = b ] \\
    &\quad \leq 1 - \frac{1}{n} \left( 1 - \frac{D}{k} \right) + \frac{D}{nk}
       = 1 - \frac{1}{n} + \frac{2 D}{nk} .
  \end{align*}
  where $w' = w_1[v_0 \mapsto c_1]$.
  The first equality holds because the distance between the two 
  resulting coloring will be at most $2$ by judgment \cref{eq:glbr-judge}.
  The second equality holds since $1 = \Pr[d'=0] + \Pr[d'=1] + \Pr[d'=2]$.
  The second to last step follows since each vertex has at most $D$ neighbors,
  so there are at least $k - D$ valid colors at any vertex.
\end{proof}

Applying the path coupling lemma (\cref{lem:pathcoupling}), noting that the
diameter is $n$ since there are $n$ vertices, proves that the Glauber
dynamics mixes quickly if there are sufficiently many colors $k$.

\begin{theorem}
  Consider the Glauber dynamics on $k$ colors with a graph $G$ with $n$ vertices
  and degree at most $D$, and suppose $k \geq 2 D + 1$. Then, for some
  constant $\beta < 1$,
  \[
    \TV( \sem{\glauber(T)}{m_1}, \sem{\glauber(T)}{m_2} ) \leq \beta^T n
  \]
  for any two initial memories $m_1$ and $m_2$ containing valid colorings.
\end{theorem}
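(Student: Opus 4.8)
The plan is to combine the per-step contraction established in the preceding lemma with the path coupling principle of \cref{lem:pathcoupling}, instantiated at $t = T$. First I would set $\beta \triangleq 1 - 1/n + 2D/(kn)$, the bound that the lemma supplies on the expected Hamming distance after one step of $\glauber^\dagger$ from adjacent states. The hypothesis $k \geq 2D + 1$ is exactly what makes this a genuine contraction: $\beta < 1 \iff 2D/(kn) < 1/n \iff k > 2D$, and for integer $k$ this is $k \geq 2D + 1$. So under the stated hypothesis $\beta < 1$, which is the interesting regime of \cref{lem:pathcoupling}.

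Second, I would connect the program to the abstract Markov chain of the path coupling lemma. As observed after the definition of $\glauber^\dagger$, the command $\glauber^\dagger; w \gets w'; \imath \gets \imath + 1$ is semantically equivalent to the loop body of $\glauber$, so running $\glauber(T)$ from an initial valid coloring computes exactly $f^T$ applied to that coloring (on the $w$-marginal), where $f : \OState \to \Dist(\OState)$ is the single-step transition function. The lemma of the previous subsection then certifies, for every pair of adjacent valid colorings $w_1, w_2$, a coupling $\mu$ of $f(w_1), f(w_2)$ with $\E {(r, r')} {\mu} {d(r, r')} \leq \beta$, which is precisely the hypothesis required to invoke \cref{lem:pathcoupling}.

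Third, I would bound the diameter: any two colorings of the $n$ vertices differ in at most $n$ positions, so the diameter is at most $n$, i.e.\ I may take $\Delta = n$. Substituting $\beta$, $\Delta = n$, and $t = T$ into the conclusion of \cref{lem:pathcoupling} gives a coupling of $f^T(w_1), f^T(w_2)$ with $\TV(f^T(w_1), f^T(w_2)) \leq \beta^T n$, and by the semantic identification of the previous paragraph this is exactly $\TV(\sem{\glauber(T)}{m_1}, \sem{\glauber(T)}{m_2}) \leq \beta^T n$, as claimed.

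The main obstacle I anticipate is discharging the \emph{path metric} hypothesis of \cref{lem:pathcoupling} over the restricted state space of \emph{valid} colorings, rather than over all maps $V \to C$ where the Hamming distance is trivially a path metric. When $d(w, w') > 1$ I must exhibit an intermediate \emph{valid} coloring $w''$ with $d(w, w'') + d(w'', w') = d(w, w')$; the naive choice---recolor a single disagreeing vertex of $w$ to its value in $w'$---need not be proper, since that color may collide with a neighbor's color. Here I expect to lean once more on the colorability margin $k \geq 2D + 1$: having more than twice as many colors as the maximum degree guarantees a free color at each vertex, so any two valid colorings can be joined by a path of single-vertex recolorings passing only through valid states, which simultaneously certifies the path-metric property and legitimizes the diameter bound $\Delta = n$ used above.
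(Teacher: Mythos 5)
Your first three steps reproduce the paper's proof exactly: the paper's entire argument is to instantiate \cref{lem:pathcoupling} with the contraction bound $\beta = 1 - 1/n + 2D/(kn)$ from the preceding lemma (with $k \geq 2D+1$ forcing $\beta < 1$) and diameter $\Delta = n$. Your final paragraph raises a legitimate subtlety that the paper passes over in silence --- but your proposed patch is false. It is \emph{not} true that $k \geq 2D+1$ (or any number of colors) makes Hamming distance a path metric on the set of valid colorings. Counterexample: let $u$ and $v$ be adjacent vertices and let $w, w'$ be valid colorings agreeing everywhere except that $w(u)=1, w(v)=2$ while $w'(u)=2, w'(v)=1$. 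Then $d(w,w')=2$, but any coloring $w''$ with $d(w,w'')=d(w'',w')=1$ recolors exactly one of $u,v$ to its target color and so contains a monochromatic edge; no valid intermediate exists, however large $k$ is, and the shortest all-valid path has length $3$ (via a third color). A large colorability margin buys connectivity of the valid-coloring graph, but a path metric requires \emph{geodesic} paths, which is strictly stronger.

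The standard repair, implicit in \citet{bubley1997path}, is to apply path coupling over the full state space $\OState = V \to C$ of all colorings, proper or not, where Hamming distance is trivially a path metric and the diameter is exactly $n$. One must then check the contraction hypothesis for \emph{arbitrary} adjacent states, not just valid ones. The coupling construction itself (identity coupling on the vertex, transposition $\pi^{\ab}$ on colors when the sampled vertex neighbors $v_0$) never uses validity; the delicate term is $\Pr[v_1 = v_0 \land \valid{G}(w')]$, since for a globally invalid $w_1$ a conflict away from $v_0$ blocks acceptance forever under the program's global test $\valid{G}(w[v \mapsto c])$, and the claimed lower bound $\frac{1}{n}(1 - \frac{D}{k})$ fails. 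The fix is to take the abstract transition function $f$ in \cref{lem:pathcoupling} to accept based on the \emph{local} check at the resampled vertex (the proposed color avoids the neighbors' current colors); the contraction bound then holds for every adjacent pair by the same computation, and $f$ agrees with the loop body of $\glauber$ on valid colorings --- which are preserved by the dynamics --- so $\sem{\glauber(T)}{m_i}$ coincides with the $T$-fold iterate of $f$ from the valid initial memories, and the $\TV$ bound transfers. Without some such move, your invocation of \cref{lem:pathcoupling} over the valid-coloring space does not go through.
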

This theorem recovers the result by \citet{bubley1997path}; this is the key
step to showing that running Glauber dynamics for a small number of steps and
taking a sample is almost equivalent to drawing a uniformly random sample from
all proper colorings of the graph.

\subsection{Independent Sets: the Conserved Hard-Core Model}

Our second example is from graph theory and statistical physics, modeling the
evolution of a physical system in the \emph{conserved hard-core lattice gas}
(CHLG) model~\citep{bubley1997path}. Suppose that we have a set $P$ of particles, and we have a graph
$G$. A \emph{placement} is a map $w : P \to V$, placing each particle at a
vertex of the graph. We wish to place the particles so that each vertex has at
most one particle, and no two particles are located at adjacent vertices; we
call such a placement \emph{safe} and denote it by $\safe{G}(w)$. For a specific
graph, there could be multiple safe placements.

If we want to sample a uniformly random safe placement, we can use a simple
Markov chain. We take the state space $\Omega = P \to V$ to be the set of
placements. Again, we take $G$ and $P$ to be finite. 
We start using a safe initial placement. Each step, we sample a
particle $p$ from $P$ and a vertex $v$ from $V$ uniformly at random and try to
place $p$ at $v$. If $w[p \mapsto v]$ is safe, then we make this the new
placement; otherwise, we keep the same placement. We can model $T$ steps of this
dynamics with the following
program $\chlg(T)$:
\begin{center}
  \begin{algorithmic}
$\imath\gets 0$;
\While{$\imath<T$};
\State $\irnd{p}{P}$; $\irnd{v}{V}$;
\State \SIfT {$\safe{G}(w[p \mapsto v])$}{$w \gets w[p \mapsto v]$}
\State $\imath\gets \imath+1$
\EndWhile
\end{algorithmic}
\end{center}
As in the previous example, we adapt the loop body to form $\chlg^\dagger$:
\begin{center}
\begin{algorithmic}
\State $\irnd{p}{P}$; $\irnd{v}{V}$;
\State \SIf {$\safe{G}(w[p \mapsto v])$}
            {$w' \gets w[p \mapsto v]$}
            {$w' \gets w$}
\end{algorithmic}
\end{center}

Like the graph coloring sampler, we take the path metric on placements to be
Hamming distance and try to find a coupling on the distributions from adjacent
initial placements.

\begin{lemma}
  Let $s = |P|$ and $n = |V|$, and suppose that the graph $G$ has degree bounded
  by $D$. Starting from any two adjacent safe placements $w_1$ and $w_2$, there
  is a coupling $\mu$ on the distributions after one step such that
  \begin{align*}
    \E {(w_1, w_2)} \mu {d(w_1', w_2')}  \leq \left( 1 - \frac{1}{s} \right)
    \left( \frac{ 3 (D + 1) }{ n } \right) .
  \end{align*}
\end{lemma}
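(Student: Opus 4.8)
The plan is to follow the structure of the preceding Glauber-dynamics argument almost verbatim: first realize the one-step coupling on two adjacent safe placements as a \Sprprhl\ product, and then bound the expected distance by a two-case analysis on which particle is resampled. I would begin by fixing notation: since $d(w_1,w_2)=1$, the placements agree on every particle except a single one, say $p_0$, with $w_1(p_0)=a$ and $w_2(p_0)=b$ for distinct vertices $a\neq b$. I would then take the product $c_0$ to use identity couplings for both samples — draw $(p_1,p_2)$ from $\CpId{P}$ so the same particle $p$ is proposed on both sides, and $(v_1,v_2)$ from $\CpId{V}$ so the same target vertex $v$ is proposed — followed by the two independent safe-update conditionals of $\chlg_1^\dagger$ and $\chlg_2^\dagger$. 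Using [\textsc{Rand}] with these identity couplings together with the one-sided conditional rules, I would derive
\[
  \Equivw{\chlg_1^\dagger}{\chlg_2^\dagger}{d(w_1,w_2)=1}{d(w_1',w_2')\leq 2}{c_0}.
\]
The essential content of the post-condition is that a single step moves at most one particle, so starting from $d=1$ the distance afterwards satisfies $d'\triangleq d(w_1',w_2')\in\{0,1,2\}$.

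With the product in hand, the proof reduces to the exact identity
\[
  \sE{\mu}{d'} = 1 - \Pr[d'=0] + \Pr[d'=2],
\]
which holds because $d'\in\{0,1,2\}$ and $\Pr[d'=0]+\Pr[d'=1]+\Pr[d'=2]=1$, exactly as in the coloring lemma. Thus it suffices to lower-bound the coalescence probability $\Pr[d'=0]$ and upper-bound the split probability $\Pr[d'=2]$.

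For the coalescence term, the distance can fall to $0$ only when the resampled particle is $p_0$ (otherwise $p_0$ keeps the states apart), an event of probability $1/s$ under the identity coupling on $P$. Conditioned on $p=p_0$, both sides attempt the \emph{identical} move $p_0\mapsto v$, so the placements $w_1[p_0\mapsto v]$ and $w_2[p_0\mapsto v]$ coincide and the safety test is perfectly synchronized; the walks then meet exactly when $v$ is a safe target for $p_0$. Since the vertices blocked for $p_0$ are those occupied by or adjacent to the other $s-1$ particles, at most $(s-1)(D+1)$ of them, I get $\Pr[d'=0]\geq \frac1s\bigl(1-(s-1)(D+1)/n\bigr)$. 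For the split term, a fresh disagreement appears only when $p\neq p_0$ (probability $1-\frac1s$) \emph{and} the safety verdicts differ across the two sides. Because $w_1$ and $w_2$ differ only in $p_0$'s location, the two attempted placements $w_1[p\mapsto v]$ and $w_2[p\mapsto v]$ again differ only at $p_0$, so their safety can disagree only when $v$ interacts with one of $p_0$'s two positions — that is, when $v$ lies in the symmetric difference of the conflict zones $\{a\}\cup\neighbors{G}(a)$ and $\{b\}\cup\neighbors{G}(b)$, each of size at most $D+1$. Hence $\Pr[d'=2]\leq (1-\frac1s)\cdot 2(D+1)/n$. Substituting both estimates into the identity and factoring out the common $1-\frac1s$ then gives the expected-distance bound in the statement, with the coefficient $3$ arising as one factor of $(D+1)/n$ from the chance that $p_0$'s re-placement is blocked and two more from the conflict zones of $a$ and $b$.

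The main obstacle is not the arithmetic but the two ``(de)synchronization of safety'' observations that drive the case analysis: that for $p=p_0$ the two proposed placements are literally the same placement (so acceptance is coupled automatically), and that for $p\neq p_0$ acceptance can diverge only through $p_0$'s differing location and hence only for the $O(D)$ vertices in the two conflict zones. Verifying these rigorously, and discharging them as the semantic content of the post-condition $d'\leq 2$ in the \Sprprhl\ derivation, is where the real work lies; the expected-distance bound is then a short computation.
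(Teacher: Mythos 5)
Your derivation matches the paper's proof step for step up to the last line: the same product program built with [\textsc{Rand}] and the identity couplings $\CpId{P}$ and $\CpId{V}$, the same one-sided conditional rules giving the post-condition $d'\leq 2$, the same identity $\sE{\mu}{d'} = 1 - \Pr[d'=0] + \Pr[d'=2]$, and the same two estimates
$\Pr[d'=0] \geq \frac{1}{s}\bigl(1 - \frac{(s-1)(D+1)}{n}\bigr)$ and
$\Pr[d'=2] \leq \bigl(1-\frac{1}{s}\bigr)\frac{2(D+1)}{n}$, with the correct justifications (synchronized acceptance when $p=p_0$; disagreement only via the conflict zones of $a$ and $b$ when $p\neq p_0$).

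The genuine gap is the final ``short computation,'' which does not in fact produce the stated bound. Substituting your two estimates gives
\begin{align*}
  \sE{\mu}{d'}
  \;\leq\; 1 - \frac{1}{s}\left(1 - \frac{(s-1)(D+1)}{n}\right)
    + \left(1-\frac{1}{s}\right)\frac{2(D+1)}{n}
  \;=\; \left(1-\frac{1}{s}\right)\left(1 + \frac{3(D+1)}{n}\right),
\end{align*}
not $\bigl(1-\frac{1}{s}\bigr)\frac{3(D+1)}{n}$: the probability mass at $d'=1$ leaves a leading $1-\frac{1}{s}$ that does not cancel, so there is no ``common factor'' to pull out. Worse, the bound as printed is unprovable by \emph{any} coupling: under any coupling of the two uniform particle choices, both sides resample some $p\neq p_0$ with probability at least $1-\frac{2}{s}$, and in that event the updated placements still disagree at $p_0$, so $\sE{\mu}{d'} \geq 1-\frac{2}{s}$, which exceeds $\bigl(1-\frac{1}{s}\bigr)\frac{3(D+1)}{n}$ whenever, say, $s\geq 3$ and $n > 9(D+1)$. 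You are in good company --- the paper's own final displayed equality commits exactly the same slip, and the lemma statement evidently carries a typo (a missing ``$1+{}$''). The corrected bound $\bigl(1-\frac{1}{s}\bigr)\bigl(1+\frac{3(D+1)}{n}\bigr)$ is what your (and the paper's) computation actually delivers, and it is also exactly what the downstream theorem requires: it is strictly below $1$ precisely when $3(s-1)(D+1) < n$, matching the hypothesis $s \leq n/3(D+1)+1$. You should state and prove the corrected inequality rather than assert the printed one.
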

\begin{proof}
  Let $\chlg^\dagger_1, \chlg^\dagger_2$ be two copies of the transition
  function, with variables tagged.  Consider two adjacent placements $w_1$ and
  $w_2$. We will sketch how to couple the transitions.
  
  We use rule [\textsc{Rand}] twice to couple the particle and vertex samplings
  with the identity coupling, ensuring $p_1 = p_2$ and $v_1 =
  v_2$. Then, we can apply the one-sided rules for conditionals to the left
  and the right sides ([\textsc{Cond-L}] and [\textsc{Cond-R}]) to conclude the
  following judgment:
  \begin{align*}
    \Equivw{\chlg^\dagger_1}{\chlg^\dagger_2}{d(w_1, w_2) = 1}{d(w_1, w_2) \leq 2}{c_0} ,
  \end{align*}
  where $c_0$ is the following product program:
  \begin{center}
    \begin{algorithmic}
      \State $\irnd{(p_1, p_2)}{\CpId{P}}$;
      $\irnd{(v_1, v_2)}{\CpId{V}}$;
      \State
      \SIf {$\safe{G}(w_1[p_1 \mapsto v_1])$}
      {$w_1' \gets w_1[p_1 \mapsto v_1]$}
      {$w_1' \gets w_1$};
      \State
      \SIf {$\safe{G}(w_2[p_2 \mapsto v_2])$}
      {$w_2' \gets w_2[p_2 \mapsto v_2]$}
      {$w_2' \gets w_2$}
    \end{algorithmic}
  \end{center}
  Then, we can bound the expected distance between $w_1', w_2'$ in the product
  program. Let $d' = d(w'_1,w'_2)$, we have:
  \begin{align*}
    & \sE {(w_1, w_2)} {\mu} {d'} \\
    &\; = 1 - \Pr [d' = 0] + \Pr [d' = 2] \\
    &\; = 1 - \Pr [ p = p_0 \land \safe{G}(w_1[p \mapsto v])] \\
    &\; \quad + \Pr [ p \neq p_0 \land \neg(\safe{G}(w_1[p \mapsto v])
                   \iff \safe{G}(w_2[p \mapsto v]))] \\
    &\; \leq 1 - \Pr [ p = p_0 \land \safe{G}(w_1[p \mapsto v])] \\
    &\; \quad + \Pr [ p \neq p_0 \land
          \neg(\safe{G}(w_1[p \mapsto v]) \land \safe{G}(w_2[p \mapsto v]))].
  \end{align*}
  We can bound the two probability terms. For the first term, we know that the
  probability of selecting $p = p_0$ is $1/s$, and the probability that $p$ is
  safe at $v$ if it avoids all other points (at most $s - 1$) and all the
  neighbors of the other points (at most $(s - 1) D$); this probability is the
  same for both placements $w_1$ and $w_2$, since the two placements are
  identical on points besides $p_0$.

  For the second term, we know that the probability of selecting $p \neq p_0$ is
  $1 - 1/s$, and $p$ is not safe at $v$ in placement $w_1$ or in $w_2$ if we
  select the position $a$, $b$, or one of their neighbors. Putting everything
  together, we can conclude:

  \begin{align*}
    &\sE {\mu} {d(w_1', w_2')} \\
    &\quad \leq 1 - \frac{1}{s} \left( 1 - \frac{(s - 1)(D + 1)}{n} \right) \\
    &\quad \quad + (s - 1) \left(
        \frac{ |\neighbors{G}(a)| + 1 + |\neighbors{G}(b)| + 1 }{ sn }
      \right) \\
    &\quad \leq 1 - \frac{1}{s} \left(
        1 - \frac{(s - 1)(D + 1)}{n}
      \right) + (s - 1) \left( \frac{ 2(D + 1) }{ sn } \right) \\
    &\quad = \left(1 - \frac{1}{s} \right) \left( \frac{3 (D + 1) }{n} \right) .
    \qedhere
  \end{align*}
\end{proof}

Applying the path coupling lemma (\cref{lem:pathcoupling}) shows that if we
iterate the transition function on two initial placements, the resulting
distributions on placements converge quickly.

\begin{theorem}
  Consider the conserved lattice gas model with $s = |P|$ particles on a graph
  $G$ with $n = |V|$ vertices and degree at most $D$, where $s \leq n/3(D + 1) +
  1$. Then, for a constant $\beta < 1$,
  \[
    \TV( \sem{\chlg}{m_1}, \sem{\chlg}{m_2}) \leq \beta^T s
  \]
  for any two initial memories $m_1, m_2$ containing safe placements.
\end{theorem}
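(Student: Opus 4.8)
The plan is to obtain this theorem as an essentially immediate corollary of the single-step contraction bound just established, combined with the path coupling lemma (\cref{lem:pathcoupling}); all the genuine work lies in the preceding lemma, so what remains is to package it and discharge the numerical side condition. First I would fix the path metric: take $d$ to be Hamming distance on placements, i.e.\ the number of particles $p \in P$ at which the two placements disagree. This is a path metric on $\OState = P \to V$, since if $d(w_1, w_2) = m > 1$ one can pick any particle where $w_1$ and $w_2$ differ, move $w_1$ to agree with $w_2$ at that particle to obtain an intermediate $w''$ with $d(w_1, w'') = 1$ and $d(w'', w_2) = m - 1$, so $d(w_1, w_2) = d(w_1, w'') + d(w'', w_2)$. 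Because any two placements of $s$ particles differ in at most $s$ coordinates, the diameter satisfies $\Delta \le s$, matching the factor of $s$ in the claimed bound.

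Next I would set $\beta \triangleq \left(1 - \tfrac1s\right)\tfrac{3(D+1)}{n}$, the contraction factor supplied by the preceding lemma, which furnishes a coupling $\mu$ of the one-step distributions from adjacent states with $\E{(r,r')}{\mu}{d(r,r')} \le \beta$. The remaining obligation is to check $\beta < 1$ under the hypothesis $s \le \tfrac{n}{3(D+1)} + 1$. Rewriting this as $s - 1 \le \tfrac{n}{3(D+1)}$, equivalently $\tfrac{3(D+1)}{n} \le \tfrac{1}{s-1}$ (for $s > 1$; the case $s = 1$ gives $\beta = 0$ at once), yields
\[
  \beta = \frac{s-1}{s}\cdot\frac{3(D+1)}{n} \le \frac{s-1}{s}\cdot\frac{1}{s-1} = \frac1s < 1 ,
\]
so $\beta$ is a constant strictly below $1$ as required.

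Finally I would invoke \cref{lem:pathcoupling} with this $\beta$ and $\Delta \le s$: starting from the two safe placements $m_1, m_2$ and running the chain for $T$ steps, the lemma produces a coupling of $\sem{\chlg}{m_1}$ and $\sem{\chlg}{m_2}$ under which $\TV(\sem{\chlg}{m_1}, \sem{\chlg}{m_2}) \le \beta^T \Delta \le \beta^T s$, which is the asserted inequality. The one point that genuinely requires care — and the step I expect to be the main obstacle — is the interface between the path metric and the restriction to safe placements: \cref{lem:pathcoupling} demands a contracting coupling for \emph{every} pair of adjacent states, whereas the lemma above was stated for adjacent \emph{safe} placements, and the intermediate placement $w''$ obtained by changing a single coordinate along a shortest path need not itself be safe. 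I would resolve this by observing that the contraction argument only used adjacency (disagreement in a single coordinate), not safety of the endpoints, so the bound extends to all adjacent pairs; alternatively one restricts $\OState$ to the reachable (safe) placements and checks that the diameter there is still at most $s$. Once this is settled, the numerical computation above closes the proof.
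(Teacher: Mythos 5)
Your main line is exactly the paper's: the paper proves this theorem by invoking \cref{lem:pathcoupling} with the contraction bound of the preceding lemma and the Hamming diameter $\Delta \le s$, and your computation $\beta = \frac{s-1}{s}\cdot\frac{3(D+1)}{n} \le \frac{1}{s} < 1$ correctly fills in the constant that the paper leaves implicit. You have also correctly identified the one point the paper silently glosses over: \cref{lem:pathcoupling} needs a contracting coupling on \emph{all} adjacent pairs arising along geodesics, while the lemma was only proved for adjacent \emph{safe} placements.

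However, your first proposed repair is wrong: the contraction argument does use safety, not just adjacency. The lower bound $\Pr[d'=0] \ge \frac{1}{s}\bigl(1 - \frac{(s-1)(D+1)}{n}\bigr)$ comes from the event that the differing particle $p_0$ is selected and relocated to a vertex clearing the other particles; but $\safe{G}(w_1[p_0 \mapsto v])$ also requires the shared $(s-1)$-particle sub-placement to be internally safe. For an adjacent pair whose common restriction is unsafe, \emph{no} relocation of $p_0$ is ever accepted in either chain, so $\Pr[d'=0] = 0$ and $\Exp[d'] \ge 1$: the contraction fails outright for such pairs, and your claim that the bound ``only used adjacency'' does not survive inspection. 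Your fallback (restrict $\OState$ to safe placements) is also not routine: Hamming geodesics between safe placements can be forced to leave the safe set, so the induced path metric on safe placements need not be Hamming distance and the bound $\Delta \le s$ needs justification. Concretely, on a long cycle graph ($D = 2$, $n$ large, so the hypothesis $s \le n/3(D+1)+1$ holds with $s=2$) take $w_1 = (v_1, v_4)$ and $w_2 = (v_5, v_2)$: both are safe and at Hamming distance $2$, but both one-step intermediates are unsafe, so any safe-state path is strictly longer. For $s = 2$ one can rescue the argument because every one-particle restriction is trivially safe (so the contraction actually holds on all adjacent pairs), but for $s \ge 3$ the hybrid placements along a geodesic can have unsafe shared restrictions, and one needs a routing argument choosing the order of coordinate changes — or a different coupling altogether, which is how \citet{bubley1997path} proceed (their remark-referenced maximal coupling avoids the issue and yields the stronger constant). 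In short: same approach as the paper, with a genuine gap at the adjacency/safety interface that the paper also leaves unaddressed, and that your stated patches do not close.
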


\begin{remark*}
This theorem is slightly weaker than the corresponding result by
\citet{bubley1997path}, who prove rapid mixing under the weaker condition $s
\leq n/2(D + 1) + 1$. Roughly, they use the maximal coupling on the two
transition distributions, giving a tighter analysis and better bound. It is also
possible use the maximal coupling in \Sprprhl, but the corresponding
specification of the coupling distribution would be proved as part of soundness
of the logic, rather than as a property of a probabilistic program.
\end{remark*}

\section{Application: Loop Optimizations}
Program equivalence is one of the original motivation for relational program
logics~\citep{Benton04}. In this section, we demonstrate the effectiveness of our logic
using several examples of exact and approximate program equivalence.
Our first example is a loop transformation which originates from the
literature on parallelizing compilers but also has applications in
computer-aided cryptography. Our second example is drawn from the
recent literature on approximate computing, and is an instance of loop
perforation.

\subsection{Loop Strip-Mining}
Loop \emph{strip-mining} (or \emph{loop sectioning}) is a transformation that turns
a loop into a nested loop. While the transformation originates from
literature on parallelizing compilers and is primarily used to take
advantage of vectorized instructions, it is also useful for formally proving
the computational security of certain cryptographic constructions.
The following example, \cref{fig:loop}, is inspired from a proof of 
indistinguishability of the SHA3 hash function~\citep{CanteautFNPRV12}.\footnote{%
  for simplicity, the programs use an operator $f$ which takes randomness as an
  argument (note that the value $r$ is sampled immediately before the assignment
  using $f$), although in the proof of the SHA3 hash function $f$ is a procedure
  call whose body performs random samplings.}
Using the rule for $\kwhile$ loops, we can prove the following:
$$\Equivw{c_1}{c_2}{x_1=x_2}{x_1=x_2}{c}$$
The crux of the proof is applying the [\textsc{While}] rule with
$k_1=1$ and $k_2=M$, and $e=\imath_1 < N$, and $p_0 = \top$, 
and $p_1 = p_2 = \bot$ and an invariant $\post$
which strengthens the assertion $x_1=x_2$ mainly by adding 
$l_2 = \imath_1 \cdot M$.
Side conditions using $p_1$ and $p_2$ are trivial to prove (using the
\textsc{False} rule) since they have $\bot$ in hypothesis. It remains to check
the premise for for $p_0$, but we now have two synchronized loops; we can use
the [\textsc{Struct}] rule to remove the conditional on $p_0$ which is always
true in this case.

\begin{figure}
  \begin{minipage}[t]{0.49\linewidth}
  Nested loop:
  {{\begin{algorithmic}
  \State $\imath_1 \gets 0$;
  \While{$\imath_1 < N$}
  \State $\jmath_1 \gets 0$;
  \While {$\jmath_1 < M$}
  \State $l_1 \gets \imath_1 \cdot M + \jmath$;
  \State $\irnd{r_1}{\mu}$;
  \State $x_1 \gets f(l_1,x_1,r_1)$;
  \State $\jmath_1 \gets \jmath_1 + 1$;
  \EndWhile
  \State $\imath_1 \gets \imath_1  + 1$;
  \EndWhile
  \State $l_1 \gets N \cdot M$;
  \end{algorithmic}
      }}
  \end{minipage}
  \begin{minipage}[t]{0.49\linewidth}
  Single loop:
  {{\begin{algorithmic}
  \State $l_2 \gets 0$;
  \While {$l_2 < N \cdot M$}
  \State $\imath_2 \gets l_2 \mathop{/} M $;
  \State $\jmath_2 \gets l_2 \mathop{\%} M$;
  \State $\irnd{r_2}{\mu}$;
  \State $x_2 \gets f(l_2,x_2,r_2)$;
  \State $l_2 \gets l_2 + 1$;
  \EndWhile
  \State $\imath_2 \gets N$;
  \State $\jmath_2 \gets M$;
    \end{algorithmic}
  }}
  \end{minipage}
\\[2ex]
  Product program:
  {{
      \begin{algorithmic}
  \State $\imath_1 \gets 0$; $l_2 \gets 0$;
  \While {$\imath_1 < N$}
  \State $\jmath_1 \gets 0$;
  \While {$\jmath_1 < M$}
  \State $l_1 \gets \imath_1 \cdot M + \jmath_1$;
  \State $\imath_2 \gets l_2 \mathop{/} M $;~
         $\jmath_2 \gets l_2 \mathop{\%} M$;
  \State $\irnd{(r_1, r_2)}{\CpId{\mu}}$;
  \State $x_1 \gets f(l_1,x_1,r_1)$; $x_2 \gets f(l_2,x_2,r_2)$;
  \State $\jmath_1 \gets \jmath_1 + 1$;~
         $l_2 \gets l_2 + 1$;
  \EndWhile
  \State $\imath_1 \gets \imath_1 + 1$;
  \EndWhile
  \State $l_1 \gets N \cdot M$;~
         $\imath_2 \gets N$;~
         $\jmath_2 \gets M$;
  \end{algorithmic}
  }}
\caption{Loop strip-mining}\label{fig:loop}
\end{figure}

\subsection{Loop Perforation}
Loop perforation~\citep{SidiroglouMHR11,MisailovicRR11} is a program
transformation that delivers good trade-offs between performance and
accuracy, and is practical in many applications, including image and
audio processing, simulations and machine-learning. Informally, loop
perforation transforms a loop that performs $n$ iterations of its body
into a loop that performs $m<n$ iterations of its body, followed by a
simple post-processing statement. \Cref{fig:loopperf} shows an example
of loop perforation inspired from a financial analysis application,
called swaptions. In this example, every other loop iteration is
skipped, and the post-processing statement simply multiplies by 2 the
value $s$ computed by the optimized loop. As for the previous example,
we can prove the following judgment:
$$\Equivw{c_1}{c_2}{\top}{s_1=s_2}{c}$$
The product program can be built using the [\textsc{While}] rule.  We use $e =
\imath_1 < 2 \cdot n$, $p_0=\top$, $p_1 = p_2 = \bot$ and $k_1 = 2$ and $k_2 = 1$, and
the invariant is $\imath_1 = 2 \cdot \imath_2$.  The invariant allows to show, using
the [\textsc{Struct}] rule, that the loop on $k_1$ (denoted by $\cfor{c_1}{e_1,
  k_1}$ in [\textsc{While}]) perform exactly $2$ iterations.  Using the product,
one can also analyze the (probabilistic) accuracy rate of the transformed
program, using concentration bounds to achieve a more precise bound.

Finally, in some applications the number of iterations
performed by the perforated loop is probabilistic; for instance, the program
\begin{center}
\begin{algorithmic}
  \State $\irnd{k}{\mathsf{factors}(n)}$;~ $s \gets 0$;
  \For {($\imath \gets 0, \imath < n, i \gets i + k$)}
  \State $\irnd{x}{\mu}$;~ $s \gets s + x$;
  \EndFor
  \State $s \gets k \cdot s$;
    \end{algorithmic}
\end{center}
selects uniformly at random a factor $k$ of the original number $n$ of
iterations, and performs $n/k$ iterations. It is possible to relate
the original and the perforated loop, using the [\textsc{While}] rule
as before.

\begin{figure}
    Original program:
{{ \begin{algorithmic}
  \State $s_1 \gets 0$;
  \For {($\imath_1 \gets 0, \imath_1 < 2 \cdot n, \imath_1 \gets \imath_1 + 1$)}
  \State $\irnd{x_1}{\mu}$;~ $s_1 \gets s_1 + x_1$;
  \EndFor
    \end{algorithmic}
    }}
    Perforated program:
{{ \begin{algorithmic}
  \State $s_2 \gets 0$;
  \For {($\imath_2 \gets 0, \imath_2 < n, \imath_2 \gets \imath_2 + 1$)}
  \State $\irnd{x_2}{\mu}$;~ $s_2 \gets s_2 + x_2$;
  \EndFor
  \State $s_2 \gets 2 \cdot s_2$;
    \end{algorithmic}
    }}
    Product program:
{{ \begin{algorithmic}
  \State $s_1 \gets 0$;~ $s_2 \gets 0$;~
         $\imath_1 \gets 0$;~ $\imath_2 \gets 0$;
  \While{$\imath_1 < 2 \cdot n$}
  \State $\irnd{x_1, x_2}{\CpId{\mu}}$;
  \State $s_1 \gets s_1 + x_1$; $s_2 \gets s_2 + x_2$;
  \State $\imath_1 \gets \imath_1 + 1$;
  \State $\irnd{x_1}{\mu}$;~ $s_1 \gets s_1 + x_1$;
  \State $\imath_1 \gets \imath_1 + 1$; $\imath_2 \gets \imath_2 + 1$;
  \EndWhile
  \State $s_2 \gets 2 \cdot s_2$;
    \end{algorithmic}
    }}
  \caption{Loop perforation}\label{fig:loopperf}
 \end{figure}

\subsection{Other Optimizations and Program Transformations}
\citet{BartheCK11} define an inductive method for building valid
product programs, and use their method for validating instances of
loop optimizations. Their method combines a rule for each program
construction and a rule akin to our [\textsc{Struct}] rule. Despite
this similarity, the two methods are fundamentally different: their
treatment of $\kwhile$ loops is restricted to synchronized executions.
As a consequence, their structural rule is based on a more advanced
refinement relation between programs. Nevertheless, we can reproduce
all their examples in our formalism, taking advantage of our more
powerful rule for loops.

\section{Related Work}
Relational logics can be seen as a proof-theoretical counterpart of
semantics-based relational methods such as logical relations. Under
this view, our logic bears strong similarities with proof-relevant
logical relations~\citep{BentonHN13}. As for proof-relevant logical
relations, we expect that manipulating explicit witnesses rather than
existentially quantified can help developing meta-theoretical studies
of our logic.

Much of the recent work on product programs and relational logics has
been motivated by applications to security and compiler correctness.
For instance, \citet{BartheDR04} explore self-composition for a
variety of programming languages and show that it induces a sound and
complete reduction of an information flow policy to a safety property.
Independently, \citet{DHS05:spc} consider self-composition and
deductive verification based on dynamic logic, also for verifying
information flow policies. Later, \citet{TerauchiA05} introduce the
class of 2-safety properties and show a reduction from 2-safety to
safety of the self-composed program. Their reduction is more efficient
than self-composition as it selectively applies self-composition or a
synchronous product construction akin to cross-products (described
below). Further improvements appear in~\citet{KovacsSF13,MullerKS15}.
In a related work,~\citet{BeringerH07} observe that one can encode
2-safety properties in standard Hoare logic, provided that assertions
are sufficiently expressive to model the denotational semantics of
programs. \citet{Beringer11} further refines this approach, by
introducing the notion of relational decomposition.

\citet{ZaksP08} define a cross-product construction, which is
well-suited for reasoning about programs with identical
control-flow. \citet{BartheCK11} generalize the notion of
cross-product by proposing a more general notion of product program
which subsumes self-composition and cross-products, and show how it
enables validation of common loop optimizations. Specifically, they
define an inductive relation for proving that $c$ is a valid product
for $c_1$ and $c_2$; informally, their rules closely follow those of
our system (for deterministic constructs), except for the general rule
for $\kwhile$ loops; instead, they use a rule that is closer to the
\Sprhl\ rule, and a rule akin to our [\textsc{Struct}] rule, with a
much stronger relationship between programs in order to compensate for
the lack of generality of their rule.  \citet{BartheCK16} carry a more
precise study of the relative expressiveness of product program
constructions and relational program logics. In a different thread of
work, \citet{BartheCK13} generalize the notion of product program so
that it supports verification of refinement properties (modeled by
universal quantification over runs of the first program and
existential quantification over runs of the second program), as well
as the 2-safety properties (modeled by universal quantification over
runs of the first and the second programs). These constructions are
focused on non-probabilistic programs. Motivated by applications to
differential privacy, \citet{BGGHKS14} define a specialized
product construction from probabilistic programs to deterministic
programs, so that the original program is differentially private,
provided its deterministic product program satisfies some Hoare
specification. To the best of our knowledge, this is the sole product
construction that goes from a probabilistic language to a deterministic one.

\citet{Benton04} and \citet{Yang07} were among the first to consider
relational program logics that support direct reasoning about two
programs. \citet{Benton04} introduces Relational Hoare Logic, proves
correctness of several program transformations, and soundly embeds
a type system for information flow security
into his logic. \citet{Yang07} defines Relational Separation Logic and
proves the equivalence between Depth-First Search and the Schorr-Waite
algorithm. \citet{BartheGZ09} develop probabilistic Relational Hoare
Logic, and use it for proving computational security of cryptographic
constructions. In a follow-up work, \citet{BartheKOZ12} develop an
approximate variant of probabilistic Relational Hoare Logic, and
verify differential privacy of several algorithms. More recently,
\citet{SousaD16} propose Cartesian Hoare Logic, an extension of
relational Hoare logic to an arbitrary finite number of executions.

Several authors have considered relational logics for higher-order
programs. \citet{NanevskiBG11} develop a relational logic to reason
about information flow properties of a higher-order language with
mutable state. \citet{GhaniFS16} introduce a relational type theory,
and a supporting categorical model, for reasoning about parametricity.
\citet{BartheFGSSZ14} propose a relational extension of a subset of the
$\mathrm{F}^*$; in a follow-up work, \citet{BartheGGHRS15} combine a
relational refinement type system with a graded monad which they use
for modeling differentially private computations.

There are several works that develop more specialized
program logics for analyzing relational properties of programs. For
instance, \citet{ABB06:popl} introduce independence assertions and a
supporting program logic for proving information flow security. In a
similar way, \citet{ChaudhuriGL10} propose a logical approach for
proving continuity properties of programs, and \citet{CarbinKMR12}
develop a logical approach for reasoning about the reliability of
approximate computation.

Further afield, there has been a significant amount of work on semantical
methods for probabilistic programs and processes initiated by \citet{Kozen79},
see e.g.~\citep{BizjakB15,DallagoSA14,SangiorgiV16,Kozen16} for some recent
developments. In conjunction with these semantics, research in deductive
verification methods for non-relational properties of probabilistic programs is
an active area of research; examples
include~\citep{Hartog:thesis,Kozen85,Morgan96,Ramshaw79} to cite only a few
systems.

\section{Conclusion and Future Directions}
We have introduced \Sprprhl, a new program logic that deepens the
connection between probabilistic couplings and relational verification
of probabilistic programs in two different ways. First, \Sprprhl
broadens the class of couplings supported by relational verification.
Second, \Sprprhl derivations explicitly build a probabilistic product
program, which can be used to analyze mixing times. We have shown the
flexibility of our approach on several examples.

There is ample room for future work. On the theoretical side, it would
be interesting to extend \Sprprhl to handle continuous distributions
as was recently done by \citet{sato2016approximate} for \Saprhl. Also,
we believe that we are just scratching the surface of probabilistic
product programs; there should be many further applications, notably
in relationship with path couplings, in domains such as Brownian
motion~\citep{Lindvall02}, molecular
evolution~\citep{dixit2012finite}, and
anonymity~\citep{GomulkiewiczKK03}. On the more practical side, it
would be natural to integrate \Sprprhl\ in
\EasyCrypt~\citep{BartheDGKSS13}, a proof-assistant used for reasoning
about computational security of cryptographic constructions.  We
expect that several proofs of cryptographic constructions can be
simplified using the new loop rule, and we are also planning to use
the loop rule in an ongoing formalization of indifferentiability of
the SHA3 standard for hash functions.

\paragraph*{Acknowledgments}

This work benefited from discussions with Thomas Espitau and Ichiro Hasuo. We
also thank the anonymous reviewers for their detailed comments, which improved
earlier versions of this work. This work was partially supported by NSF grants
TC-1065060 and TWC-1513694, and a grant from the Simons Foundation ($\#360368$
to Justin Hsu).

\bibliographystyle{abbrvnat}
\bibliography{header,refs}
\end{document}